\newcommand{\lR}{\mathrm{I\hspace{-0.7mm}R}}
\newtheorem{theorem}{Theorem}
\newtheorem{lemma}{Lemma}
\numberwithin{equation}{section}
\begin{document}
\pagestyle{plain}




\title{\LARGE\textbf{Some Cosmological Consequences of Higher Dimensional Klein-Gordon-Rastall  Theory}}

\author{\normalsize{Tegar Ari Widianto,$^{1}$} Ahmad Khoirul Falah,$^{1}$ Agus Suroso,$^{1}$\\ \normalsize{Husin Alatas,$^{2}$} and Bobby Eka Gunara$^{1}$\footnote{Corresponding author} \\
\textit{\small $^{1}$Theoretical Physics Laboratory,}
\textit{\small Theoretical High Energy Physics Group,}\\
\textit{\small Faculty of Mathematics and Natural Sciences,}
\textit{\small Institut Teknologi Bandung,}\\
\textit{\small Jl. Ganesha no. 10 Bandung, Indonesia, 40132}\\
\textit{\small $^{2}$Theoretical Physics Division, Department of Physics, IPB University,}\\
\textit{\small Jl. Meranti, Kampus IPB Darmaga, Bogor 16680, Indonesia}
\\ \\
\small email: tegarariwidianto76@gmail.com, akhoirulfalah94@students.itb.ac.id,\\
\small suroso@itb.ac.id,  alatas@apps.ipb.ac.id, bobby@itb.ac.id }

\date{\today}

\maketitle




\begin{abstract}
 Using dynamical system analysis, we investigate some cosmological consequences of Rastall gravity coupled to a scalar field (called the Klein-Gordon-Rastall theory) with exponential scalar potential turned on in higher dimensions.  From the critical points of the autonomous equations, we can determine the dominant components of the energy density in different cosmic eras.  We obtain a fixed point representing a scalar field-matter-dominated era which corresponds to either a late-time or past-time attractor depending on the parameters used. According to this point, the inflationary phase, corresponding to past-time attractors, is given by unstable nodes, whilst the dark energy era, corresponding to late-time attractors, is represented by stable nodes. In the inflationary sector, power-law inflation can still occur in this Klein-Gordon-Rastall cosmological model.  On the other hand, in the late-time sector, we find a nontrivial interplay between a scalar field with an exponential potential and the non-conservative energy-momentum tensor of the non-relativistic matter field (baryonic-dark matter) in curved spacetime plays a role as the dark energy. Based on such features, the Klein-Gordon-Rastall cosmology could be a promising candidate for describing both the early and late-time universe.

\end{abstract}




\section{Introduction}
\label{sec:intro}


The accuracy of General Relativity's predictions compared to the observational results has been confirmed at the level of the solar system \cite{will2006confrontation,bleeker2001century}, galaxies and galaxy clusters scale tests \cite{liu2022galaxy,wojtak2011gravitational}. However, some problems arise when we apply General Relativity (GR) to the cosmological scale to study the dynamics of the universe. To obtain an accelerating universe that is in agreement with the observational result, the cosmological constant, $\Lambda$, must be added to the Einstein field equation{\color{blue}s}, the $\Lambda$CDM model \cite{riess1998observational,lukash2000cosmological}. The problem lies in the interpretation of $\Lambda$ as a vacuum energy density \cite{weinberg1989cosmological, rugh2002quantum}. It leads to the cosmological constant problem where the calculation using quantum field theory shows that the vacuum energy density is $\rho_{vac}\approx 10^{74}$ GeV$^4$, a value that is much bigger than the cosmological observations, $\rho_{obs}\approx 10^{-47}$ GeV$^4$ \cite{weinberg1989cosmological, martin2012everything,copeland2006dynamics}. Thus, the interpretation of $\Lambda$ as a vacuum energy density remains questionable. It may be either a natural property of the spacetime fabric structure that has anti-gravity property or a mathematical representation such as Lagrange multiplier or the integration's constant \cite{nobbenhuis2006cosmological}. In other words, the source of cosmic acceleration is still unknown.

 There have been several attempts to overcome such a problem. In the late-time epoch, one of the proposed models is the dynamical scalar field with a particular potential form. The model was later referred to as \textit{quintessence} \cite{caldwell1998cosmological}, a canonical scalar field coupled to gravity that plays a role as dark energy which can explain the late-time cosmic acceleration. At early times, the scalar field varying slowly along the potential $V(\phi)$ can trigger the universe to accelerate expansion known as the inflationary era. Such a mechanism is called \textit{slow-roll inflation} where the scalar field plays the role of the inflaton field. The difference between those two eras is that the current accelerating universe, driven by dark energy, also contains dark matter and baryons, unlike the inflationary phase, which contains only the inflaton field. Other scalar field models such as k-essence, phantom, tachyon, and dilaton \cite{armendariz2000dynamical, hannestad2002probing, dabrowski2003phantom, gibbons2002cosmological, bagla2003cosmology, gasperini2008dilaton}, have also been proposed to reveal the nature of the accelerating universe. Another theory that has been introduced to provide cosmic acceleration is the modified gravity theory such as the \textit{Brans-Dicke theory}  \cite{sen2001late}, \textit{Massive Gravity} \cite{gong2013cosmology},  $f(\mathcal{R})$ \textit{Gravity} \cite{mukherjee2014acceleration} and \textit{Nonlinear Electrodynamics} (NLED) \cite{kruglov2015universe}. All of these models are based on the conservation law of the energy-momentum tensor (EMT).

Among the modified gravity theories that have been considered is the theory proposed by Rastall in which the conservation law of EMT, $\nabla_\mu T^{\mu\nu} =0$ where $\nabla_\mu$ is the covariant derivative, in the curved space-times is replaced by the EMT non-conservation equation with the form $\nabla_\mu T^{\mu\nu} = \nabla^\nu\lambda R$ where $\lambda$ is a constant and $R$ is the spacetime Ricci scalar \cite{rastall1972generalization} such that we recover the theory of GR for $\lambda =0$ and the conservation law of EMT  in flat spacetime.  However, in its development, some dissents arose in response to the Rastall modifications. Some studies conclude that Rastall gravitational theory is equivalent to GR \cite{lindblom1982criticism,visser2018rastall}. However, according to \cite{darabi2018einstein}, the Rastall theory is not equivalent to GR and it is an open theory compared to GR so this theory has the opportunity to answer the problems that have not been explained by the standard cosmological models. In response to this issue, several studies have been conducted to explore the equivalence and non-equivalence between Rastall theory and GR. At the background level, a perfect fluid can be mapped from the Rastall to the GR framework and vice versa \cite{chagoya2023cosmological}. Nevertheless, it should be noted that a fluid described in one theory may not appear identical when viewed from the perspective of the other theory. For example, cold dark matter in the Rastall frame may correspond to warm or hot dark matter in GR. Additionally, at the background level, Rastall cosmology shows no deviation from the $\Lambda$CDM model at late times, regardless of the value of the Rastall parameter \cite{batista2012rastall}. However, the differences seem to appear at the perturbation level. For instance, the dark energy may agglomerates \cite{batista2012rastall} and the growth index of matter perturbations strongly depends on the Rastall parameter in the linear regime \cite{khyllep2019linear}. Furthermore, the deviations between both models become more significant in the non-linear regimes \cite{ziaie2021effects}. Hence, further analysis and precise observational data are needed to determine which model is favored by observations. Phenomenologically, the violation of the conservation law of EMT occurs during the particle creation process in cosmology \cite{gibbons1993cosmological,parker1971quantized}. Then, Rastall theory can be considered as a classical formulation of quantum phenomena that occur in the cosmological scale \cite{silva2013bouncing}. In addition, dark energy may arise as the result of the violation of the EMT's conservation law \cite{josset2017dark} which is a fundamental assumption used in the Rastall theory. The application of this Rastall theory and its generalization to cosmology enable us to have an accelerating universe model \cite{shahidi2021cosmological,capone2010possibility,moradpour2017generalization}. See also, for instance, \cite{lin2020cosmic,llibre2020qualitative,batista2012rastall} for interesting properties of Rastall cosmology. The Rastall theory has also been used to answer another problem from the $\Lambda$CDM model, namely, $H_0$ tension. However, the modification of Friedmann's equation by the non-conservation equation, Rastall-$\Lambda$CDM, leads to a value of $H_0$ which is not much different from that obtained by the $\Lambda$CDM model \cite{akarsu2020rastall}. Fortunately, Rastall-$f(\mathcal{R})$ theory enables us to obtain a $H_0$ value that is closer to the observed value by choosing a suitable form of $f(\mathcal{R})$ function \cite{shahidi2021cosmological}. 

Other intriguing extensions of cosmological models concern the existence of a universe with higher dimensional spacetime, $d > 4$. The emergence of such cosmological models began with the successful unification of electromagnetic and gravitational interactions by Kaluza-Klein in a five-dimensional spacetime \cite{kaluza1921unitatsproblem, klein1987quantum}, where the extra dimensions were compactified on a very small $S^1$ topology. Another motivation for modeling a higher-dimensional universe is the string theory, which requires additional dimensions to achieve consistency \cite{green1984anomaly,witten1995string}. Cosmological models with extra dimensions, beyond the 3+1 dimensions, have been extensively studied and have drawn considerable attention, particularly the Kaluza-Klein model, which aims to explain the inflationary phenomena of the early universe \cite{shafi1983cosmology, abbott1984kaluza, abbott1985kaluza}. From an inflationary perspective, while the usual 3-space undergoes expansion, the extra dimensions experience compactification \cite{shafi1983cosmology} with the speculative natural mechanism \cite{wetterich1985kaluza} in which a scalar field with an exponential potential in the context of four-dimensional theory, corresponding to the variation in the internal space's length scale, acts as a "brake" on the extra dimensions to prevent them from expanding. On the other hand, the cosmological model proposed in \cite{abbott1984kaluza} explains the inflationary mechanism in three-dimensional space occurring when the internal space reaches its maximum size at a specific time and then collapses to its minimum size. This procedure can result in an incredibly large entropy of approximately $\sim 10^{86}$, as discussed in \cite{alvarez1983entropy}.  The necessary conditions for these phenomena, along with the resolution of the horizon and flatness problems, are met when the universe possesses extra dimensions, roughly around $\sim 40$ \cite{abbott1984kaluza}. Besides the Kaluza-Klein cosmology, the compactification of extra dimensions is also predicted by string cosmology \cite{alvarez1985superstring}, see Ref. \cite{lidsey2000superstring} for review. 

In this paper, we study Rastall gravity using a different approach from the previous works mentioned above. We consider higher dimensional Rastall gravity coupled with a canonical scalar field $\phi$ called Klein-Gordon-Rastall (KGR) theory with exponential scalar potential and then study its cosmological consequences using dynamical system analysis. The biggest motivation of our study lies in the lack of discussion of KGR cosmology both in the early and late-time eras in a unified picture. We expect that KGR cosmology might provide a good description of both eras. Here, we perform the calculation of the KGR model on higher dimensional spatially flat Friedmann-Lema$\hat{\mathrm{i}}$tre-Robertson-Walker (FLRW) spacetimes to study its effects on the existence of critical points and their stability. Since we are also studying the inflationary era, considering a higher dimensional spacetime might be loosely reasonable, as it is relevant mainly during inflation \cite{shafi1983cosmology,abbott1984kaluza,abbott1985kaluza,wetterich1985kaluza,alvarez1983entropy,alvarez1985superstring,lidsey2000superstring}. Also, we investigate the cosmological consequences of our model in the absence and the presence of the cosmological constant term. Our work refers to \cite{halliwell1987scalar,wands1993exponential,copeland1998exponential,leon2013cosmological,falah2021higher} who also study the cosmological properties of other theories of gravity using dynamical system analysis. Our model has five critical points related to the cosmological era. We obtain some stable and unstable nodes corresponding to the late-time and early-time attractors, respectively, on scalar field-matter-dominated era called CP4. The description of the early-time universe (unstable nodes and accelerated) obtained from this point leads to the power-law inflation mechanism. We also derive the exact solution of power-law inflation in the KGR framework. A description of the late-time era is given by CP4 for $\Lambda = 0$ case which allows us to have stable solutions and an accelerating universe in KGR theory when non-relativistic matter exists as subdominant components. This feature cannot be found in the standard GR, $\gamma = 0$. In the present model, the accelerating universe can be explained by the nontrivial interplay between the scalar field having an exponential potential and the non-conservative EMT of the non-relativistic matter field in curved spacetime. They play a role as dark energy. However, this scenario cannot be found in CP4 for $\Lambda\neq 0$ case due to its none existence of the stable nodes. 

The structure of this paper is as follows. In Section \eqref{sec:Rastall} we derive the Friedmann equations in KGR theory. Then, we present a way to transform the Friedmann equations for $\Lambda = 0$ case in KGR theory into a set of autonomous equations and perform the dynamical analysis around the critical points in Section \eqref{sec:dynamicalLambda}. The same procedure is applied to the case $\Lambda\neq 0$ in Section \eqref{sec:Lambdaneq}. We construct the local-global existence and the uniqueness of autonomous equations for both $\Lambda = 0$ and $\Lambda \neq 0$ in Section \eqref{sec:LocGlobExis}. Next, in Section \eqref{sec:cosmologicalmodel} we discuss the cosmological implications on both inflationary and late-time accelerating universe and we track the cosmological sequences of KGR theory. Finally, we present our conclusions in Section \eqref{sec:conclusion}. The detailed calculations of the stability conditions of the critical points for $\Lambda = 0$ and $\Lambda \neq 0$ cases are given in Appendix \eqref{sec:appA} and Appendix \eqref{sec:appB}, respectively.


\section{Spatially Flat Cosmology in KGR Theory } 
\label{sec:Rastall}

In this section, we consider a homogeneous and isotropic higher dimensional cosmological model in KGR gravitational theory. Particularly, we use the ansatz metric in the rest of the paper called the spatially flat FLRW metric
\begin{eqnarray}
 \label{eq:flrw}
    ds^2 = - N^2 (t)\, dt^2 + a^2(t) \delta_{ij}\, dx^i dx^j\, ,
\end{eqnarray}
where $\delta_{ij}$ is $\delta$-Kronecker and the indices $i,j=1,2,3,...,(d-1)$ label the  spatial components. From the metric \eqref{eq:flrw}, we obtain the Ricci tensor components 
\begin{align}
   \begin{split}
        R_{00} &= - (d-1) \Big(\frac{\ddot{a}}{a}-\frac{\dot{a}\dot{N}}{aN} \Big)\, ,\qquad R_{0i} = R_{i0}=0 \, ,\\
        R_{ij} &= \delta_{ij} \Big[\frac{a\ddot{a}}{N^2} + (d - 2) \frac{\dot{a}^2}{N^2} - \frac{a\dot{a}\dot{N}} {N^3}\Big]\,, \label{eq:riccitensor}
   \end{split} 
\end{align}
which follows that we have the Ricci scalar 
\begin{eqnarray}
    R = \frac{(d-1)}{N} \bigg[ \frac{2\ddot{a}}{aN} -  \frac{2\dot{a} \dot{N}}{a N ^2} + (d - 2) \frac{\dot{a} ^2}{a ^2 N}\bigg] \, . \label{eq:ricciscalar}
\end{eqnarray}
Let us focus on the EMT in our model. The first part of the EMT is the perfect fluid EMT whose form is given by
\begin{eqnarray}
    T^{(\text{m})}_{\mu\nu} = (\rho_\text{m} + p_\text{m}) u_\mu u_\nu + p_\text{m} g_{\mu\nu}\, , \label{eq:emtmatter}
\end{eqnarray}
where the indices $\mu,\nu = 0, 1,2,3,...,(d-1)$ are the spacetime indices,  $u_{\mu}=-N\delta^0_\mu$,  $p_\text{m} = w_\text{m}\, \rho_\text{m}$, and $ w _\text{m} \in \lR$  which contains dust-like, radiation-like and vacuum-like for $d > 4$.  The second part of the energy-momentum tensor is coming from a canonical scalar field $\phi$, namely, 
\begin{eqnarray}
    T ^{(\phi)} _{\mu\nu} = \partial_\mu \phi\, \partial_\nu \phi - \frac{1}{2} g_{\mu\nu} \, \partial_\alpha \phi\, \partial ^\alpha \phi - g_{\mu\nu} V(\phi)\, , \label{eq:emtscalarfield}
\end{eqnarray}
where $V(\phi)$ is a scalar potential. 

In the Rastall theory, we take the assumption that the divergence of the perfect fluid EMT satisfies 
\begin{eqnarray}
 \nabla^\nu T^{(\text{m})} _{\mu\nu} = \lambda \nabla _{\mu} R\, ,   \label{eq:rastallco}
\end{eqnarray}
where $\lambda$ is a constant Rastall parameter  and $R$ is a Ricci scalar  given by (\ref{eq:ricciscalar}), whereas the scalar field EMT still satisfies the conservation law
\begin{eqnarray}
    \nabla^\nu T^{(\phi)}_{\mu\nu} = 0\, .
    \label{eq:continuityemtscalarfield}
\end{eqnarray}
By adding the equations \eqref{eq:rastallco} and \eqref{eq:continuityemtscalarfield} 
\begin{eqnarray}
    \nabla^\mu \big( T^{(\text{m})}_{\mu\nu} + T^{(\phi)}_{\mu\nu} - \lambda g_{\mu\nu} R\big) = 0\, ,
    \label{eq:nonconsEMT}
\end{eqnarray}
and using the contracted Bianchi identity with cosmological constant
\begin{eqnarray}
    \nabla^\mu \Big( R_{\mu\nu} -\frac{1}{2} g_{\mu\nu}R + \Lambda g_{\mu\nu}\Big)=0\, ,
\end{eqnarray}
then the assumption \eqref{eq:nonconsEMT} is consistent with the following field equations
\begin{eqnarray}
    R_{\mu\nu} + \left(\gamma-\frac{1}{2}\right) g_{\mu\nu}R = \kappa\left(T^{(\text{m})}_{\mu\nu} + T^{(\phi)}_{\mu\nu} - {\tilde{\Lambda}} g_{\mu\nu}\right)\, ,\label{eq:fieldeqn}
\end{eqnarray}
which is commonly known as the Rastall field equations with $\gamma\equiv\kappa\lambda$ and $\tilde{\Lambda}\equiv\frac{\Lambda}{\kappa}$. The quantity $R_{\mu\nu}$ is the Ricci tensor whose components are given by  \eqref{eq:riccitensor}. Some comments are in order. First, we recover the standard higher dimensional GR by taking  $\lambda=0$. Second, the conservation law of the scalar field EMT \eqref{eq:continuityemtscalarfield} is ensured by the scalar field equation of motions
\begin{eqnarray}
   g^{\mu\nu}\nabla_\mu \partial_\nu \phi = - \frac{\partial V}{\partial\phi} ~ .\label{eq:scalarfieldeom}
\end{eqnarray}

Now, let us focus on the Rastall field equation \eqref{eq:fieldeqn} on the metric \eqref{eq:flrw}. Since we are working on the homogeneous metric, the scalar field would be time dependent only. Indeed, it is consistent with $0i$ components of Eq. \eqref{eq:fieldeqn}. By substituting equation \eqref{eq:riccitensor},\eqref{eq:ricciscalar},\eqref{eq:emtmatter}, and \eqref{eq:emtscalarfield} into equation \eqref{eq:fieldeqn}, one obtains the modified first Friedmann equation \\
\begin{align}
\frac{(d-1) ( d - 2) }{2} H ^2 = \kappa (\rho_\text{m} + \rho_\text{KGR} + \rho_{\Lambda})\, , \label{eq:friedmann}
\end{align}
 where
\begin{eqnarray}
   \rho_\text{KGR} \equiv \bigg[1 - \frac{4 \gamma(d-1)}{(d-2)} \bigg] \frac{\dot{\phi}^2}{2 N^2} + V(\phi) - \frac{2\gamma(d-1)}{(d-2)}(1 + w_\text{m}) \, \rho_\text{m}\, + \frac{d \, (d-1) \gamma H ^2}{\kappa} \, ,
\end{eqnarray}
and
\begin{eqnarray}
    \rho_{\Lambda} \equiv \tilde{\Lambda} \, .
\end{eqnarray}
One also obtains the modified second Friedmann equation
\begin{align}
    [d - 2 -  2\gamma (d-1)] \frac{\dot{H}}{N} + \frac{1}{2} (d-1) (d - 2 - 2\gamma d) H^2  = - \kappa \left( p_\text{m} + \frac{\dot{\phi}^2}{2N^2} - V(\phi) -\tilde{\Lambda}\right) \, .
    \label{eq:raychauduri}
\end{align}
Here, we have defined the Hubble parameter  
\begin{eqnarray}
    H = \frac{\dot{a}}{aN}\, . 
    \label{eq:hubble}
\end{eqnarray}
The time component of  \eqref{eq:rastallco} implies the fluid equation of higher dimensional Rastall theory 
\begin{eqnarray}
    \dot{\rho}_\text{m} + (d-1)NH(\rho_\text{m}+p_\text{m}) = -\frac{2 \lambda(d-1)}{N}\Bigg(\ddot{H}-\frac{\dot{N}\dot{H}}{N}+dNH\dot{H}\Bigg)\, , 
    \label{A}
\end{eqnarray}
in which we have
\begin{eqnarray}
\label{eq:ray}
    \frac{\dot{H}}{N} = -\frac{\kappa}{(d-2)} \Big[\rho_\text{m}+p_\text{m} + \frac{\dot{\phi}^2}{N^2}\Big] \, ,
\end{eqnarray}
and
\begin{align}
  \ddot{H} = - \frac{\kappa}{(d-2)} \Bigg[\dot{N}(\rho_m+p_m) + N(\dot{\rho}_m+\dot{p}_m)-\frac{\dot{N}}{N^2}\dot{\phi}^2 + \frac{2\dot{\phi}\ddot{\phi}}{N}\Bigg] 
 \, . \label{eq:acc}
\end{align}
By substituting Eq. \eqref{eq:ray} into Eq. \eqref{eq:raychauduri}, we can rewrite the second Friedmann equation in the following form
\begin{align}
    (d - 2) \frac{\dot{H}}{N} + \frac{1}{2} (d-1) (d - 2) H^2 = - \kappa \left( p_\text{m} + p_\text{KGR} + p_{\Lambda}\right) \, ,
    \label{eq:friedmannII}
\end{align}
\begin{eqnarray}
p_\text{KGR} \equiv  \bigg[1 + \frac{4 \gamma (d-1)}{(d-2)} \bigg] \frac{\dot{\phi}^2}{2 N^2} - V(\phi) + \frac{2\gamma (d - 1)}{(d-2)}(1 + w_\text{m}) \, \rho_\text{m} - \frac{d \, (d-1) \gamma H ^2}{\kappa} \, ,
\end{eqnarray}
and
\begin{eqnarray}
    p_{\Lambda} \equiv - \tilde{\Lambda} \, .
\end{eqnarray}
From Eq. \eqref{eq:friedmann} and Eq. \eqref{eq:friedmannII}, we can define the equation of state parameters as follows
\begin{align}
    w_\text{m}  \equiv &  \frac{p_\text{m}}{\rho_\text{m}} \, , \nonumber \\ 
    w_\text{KGR} \equiv & \frac{p_\text{KGR}}{\rho_\text{KGR}} = \frac{\left[1 + \frac{4 \gamma (d - 1)}{(d-2)} \right] \frac{\dot{\phi}^2}{2 N^2} - V(\phi) + \frac{2\gamma (d - 1)}{(d-2)}(1 + w_\text{m}) \, \rho_\text{m}- \frac{d \, (d-1) \gamma H ^2}{\kappa}}{\left[1 - \frac{4 \gamma(d-1)}{(d-2)} \right] \frac{\dot{\phi}^2}{2 N^2} + V(\phi) - \frac{2\gamma(d-1)}{(d-2)}(1 + w_\text{m}) \, \rho_\text{m}+\frac{d \, (d-1) \gamma H ^2}{\kappa}} \, , \\
    w_{\Lambda} \equiv & \frac{p_{\Lambda}}{\rho_{\Lambda}} = -1 \, \nonumber.
\end{align}
In addition, the effective equation of state parameter can be defined as \cite{copeland2006dynamics}
\begin{align}
w_\text{eff} \equiv & \frac{p_\text{m} + p_\text{KGR} + p_{\Lambda} }{\rho_\text{m} + \rho_\text{KGR} + \rho_{\Lambda}} \nonumber \\
 =& \frac{w_\text{m} \, \rho_\text{m} + \left[1 + \frac{4 \gamma (d - 1)}{(d-2)} \right] \frac{\dot{\phi}^2}{2 N^2} - V(\phi) + \frac{2\gamma (d - 1)}{(d-2)}(1 + w_\text{m}) \, \rho_\text{m}- \frac{d \, (d-1) \gamma H ^2 }{\kappa} - \tilde{\Lambda}}{\rho_\text{m} + \left[1 - \frac{4 \gamma(d-1)}{(d-2)} \right] \frac{\dot{\phi}^2}{2 N^2} + V(\phi) - \frac{2\gamma(d-1)}{(d-2)}(1 + w_\text{m}) \, \rho_\text{m}+\frac{d \, (d-1) \gamma H ^2 }{\kappa}+ \tilde{\Lambda}} \, , 
\end{align}
It is worth mentioning that eq. \eqref{A} is consistent with the scalar field equation of motions \eqref{eq:scalarfieldeom} whose form on the metric \eqref{eq:flrw} is given by
\begin{equation}
\label{eq:scalareomflrw}
 \ddot{\phi} + \Bigg[(d-1)NH - \frac{\dot{N}}{N}\Bigg]\dot{\phi} + N^2V,_{\phi} = 0\, .   
\end{equation}
Analysing on the spatial components of \eqref{eq:rastallco}, we find $p_\text{m}(t)$ and $\rho_\text{m}(t)$.
Hence, we have a model of higher dimensional Rastall cosmology with homogeneous perfect fluid. Such a model has been studied in \cite{chatterjee1990homogeneous} where $w_\text{m}=0$ for dust, $w_\text{m}=\frac{1}{(d-1)}$ for radiation and $w_\text{m}=-1$ for vacuum.


\section{Dynamical System Analysis: $\Lambda = 0$ Case} \label{sec:dynamicalLambda}

In this section, we will investigate the behavior of Rastall cosmology using the dynamical system in the absence of the cosmological constant term, namely $\Lambda = 0$. To transform our cosmological equations from the previous section to \textit{autonomous} equations, we have to choose a particular form of the scalar field potential,
\begin{equation}\label{eq:Vpotexp}
    V=V_0\exp{(-\sqrt{\kappa}\lambda_V\phi)} ~ .
\end{equation}
Gravity-coupled scalar field theories with an exponential potential have been widely considered in various physical theories such as Kaluza-Klein cosmology \cite{wetterich1985kaluza}, the Salam-Sezgin model (supergravity with ${\mathcal N} = 2$ coupled to matter in six dimensions) \cite{salam1984chiral}, and the exponential potential also provides a solution for power-law inflation \cite{lucchin1985power} where the scale factor expands according to $a(t) \propto t^\ell $ with $\ell > 1$. 

This type of potential has also been studied by several authors \cite{halliwell1987scalar, wands1993exponential, copeland1998exponential} in the context of dynamical systems focusing on either the inflationary phase or the late-time universe. Also, there is an attempt to investigate the inflationary and the dark energy eras in the massive gravity framework \cite{falah2021higher}. This inspires us to investigate these two crucial eras in the context of Rastall gravity.

First, let us define three autonomous variables as follows
\begin{eqnarray}
\label{eq:autoxm}
    x_\text{m} = \sqrt{\frac{2\kappa \rho_m}{(d-1)(d-2)H^2}}\, ,
\end{eqnarray}
\begin{eqnarray}
\label{eq:autoxp}
    x_{\phi} = \sqrt{\frac{\kappa \dot{\phi}^2}{(d-1)(d-2)N^2H^2}}\, ,
\end{eqnarray}
\begin{eqnarray}
\label{eq:autoxv}
    x_V = \sqrt{\frac{2\kappa V(\phi)}{(d-1)(d-2)H^2}}\, .
\end{eqnarray}
We have the equations of motion correspond to the autonomous variables $x_{\phi}$ and $x_V$ by differentiating them with respect to $\ln{a}$ and using \eqref{eq:raychauduri} and \eqref{eq:scalareomflrw}
\begin{align}
\label{eq:xphi}
   \frac{2}{(d-1)N} x_{\phi}^{'} =& \left(\frac{\Big[1 - \frac{2\gamma d}{(d-2)}\Big](1+w_\text{m})}{\Big[1-\frac{2\gamma(d-1)}{(d-2)}(1+w_\text{m})\Big]}-2\right)x_{\phi} +  \lambda_V \sqrt{\frac{(d-2)}{(d-1)}} x_V^2 \nonumber \\
   &+\frac{(1-w_\text{m})}{\Big[1-\frac{2\gamma(d-1)}{(d-2)}(1+w_\text{m})\Big]} x_{\phi}^3 - \frac{(1+w_\text{m})}{\Big[1-\frac{2\gamma(d-1)}{d-2}(1+w_\text{m})\Big]} x_{\phi} x_V^2 \, ,
\end{align}
\vspace{0.4cm}
\begin{align}
\label{eq:xv}
    \frac{2}{(d-1)N} x_V^{'} =& \frac{\Big[1-\frac{2\gamma d}{(d-2)}\Big] (1+w_\text{m})}{\Big[1-\frac{2\gamma(d-1)}{(d-2)}(1+w_\text{m})\Big]} x_V - \lambda_V \sqrt{\frac{(d-2)}{(d-1)}} x_{\phi} x_V \nonumber\\
    &+\frac{(1-w_\text{m})}{\Big[1-\frac{2\gamma(d-1)}{(d-2)}(1+w_\text{m})\Big]} x_V x_{\phi}^2 - \frac{(1+w_\text{m})}{\Big[1-\frac{2\gamma(d-1)}{d-2}(1+w_\text{m})\Big]} x_V^3 \, .
\end{align}
while, by using \eqref{eq:autoxm},\eqref{eq:autoxp} and \eqref{eq:autoxv}, the Friedmann equation \eqref{eq:friedmann} can be rewritten as 
\begin{equation}
\label{constraineq}
    \left[1-\frac{2\gamma(d-1)}{(d-2)}(1+w_m)\right]x_m^2+ \left[1-\frac{4\gamma(d-1)}{(d-2)}\right]x_{\phi}^2+x_V^2=1-\frac{2\gamma d}{(d-2)}\, .
\end{equation}
which can be viewed as a constraint equation. Following \cite{bohmercosmological}, let us introduce the density parameters 
\begin{eqnarray}\label{eq:Omegam}
    \Omega_\text{m} \equiv \frac{2\kappa\rho_\text{m}}{(d-1)(d-2)H^2} = x_\text{m}^2\, ,
\end{eqnarray}
\begin{align} \label{eq:Omegaphi}
    \Omega_\text{KGR} \equiv  \frac{2\kappa\rho_\text{KGR}}{(d-1)(d-2)H^2} = & \frac{1}{\left[1 - \frac{2\gamma (d-1)}{(d-2) } (1 + w_\text{m})\right]} \Bigg( \Bigg[1 -\frac{4\gamma(d-1)}{(d-2)}\Bigg] x_{\phi}^2 + x_V^2  \\ \nonumber
   & + \frac{2\gamma d}{(d-2)} - \frac{2\gamma (d-1)}{(d-2)} (1 + w_\text{m}) \Bigg) \, .
\end{align}
Then, we can rewrite the equation of state parameter of the KGR term
\begin{eqnarray} \label{eq:wR}
	w_\text{KGR}  = \frac{\left[1 + \frac{4 \gamma (d - 1)}{(d-2)} \right] x_\phi^2 - x_V^2 + \frac{2\gamma (d - 1)}{(d-2)}(1 + w_\text{m}) \, x_\text{m}^2- \frac{2 \gamma d}{(d-2) }}{\left[1 - \frac{4 \gamma(d-1)}{(d-2)} \right] x_\phi^2 + x_V^2  - \frac{2\gamma(d-1)}{(d-2)}(1 + w_\text{m}) \, x_\text{m}^2+ \frac{2 \gamma d}{(d-2)}}  \, ,
\end{eqnarray}
and the effective equation of state parameter
\begin{eqnarray} \label{eq:weff}
w_\text{eff}  = \frac{w_\text{m} x_\text{m}^2 + \left[1 + \frac{4 \gamma (d - 1)}{(d-2)} \right] x_\phi^2 - x_V^2 + \frac{2\gamma (d - 1)}{(d-2)}(1 + w_\text{m}) \, x_\text{m}^2- \frac{2 \gamma d}{(d-2) }}{x_\text{m}^2  + \left[1 - \frac{4 \gamma(d-1)}{(d-2)} \right] x_\phi^2 + x_V^2  - \frac{2\gamma(d-1)}{(d-2)}(1 + w_\text{m}) \, x_\text{m}^2+ \frac{2 \gamma d}{(d-2)}}  \, .
\end{eqnarray} 
It should be noted that if $\gamma = 0$, then $\Omega_\text{KGR} = x_{\phi}^2 + x_V^2$ and $w_{\text{KGR}} = (x_{\phi}^2 - x_V^2) / (x_{\phi}^2 + x_V^2)$. These equations are equivalent to the ones used in the standard scalar field model within the general relativity framework. The equation \eqref{constraineq} can be cast into the form 
\begin{eqnarray}
      \Omega_\text{m} + \Omega_\text{KGR} = 1\, .
\end{eqnarray}
It is clear from \eqref{eq:Omegam} that  $\Omega_\text{m} \geq 0$ implying the restriction $ \Omega_\text{KGR} \leq 1$. For example, if we consider $\Lambda$CDM model in which we have  $\Omega_\text{m} \approx 0.3 $, then  $ \Omega_\text{KGR} \approx 0.7 $ \cite{abbott2019first}. In this sense, the energy density of the cosmological constant is replaced by $ \Omega_\text{KGR}$ and it might play a role as the dark energy as long as $w_\text{KGR} < -\frac{1}{(d-1)}$. Thus, in our model, dark energy can be seen as a consequence of the existence of a scalar field and the ability of the geometry to couple the matter field which drives our universe to expand at an accelerating rate. To see this, we first define  the deceleration parameter
\begin{equation}
    q \equiv -1 -\frac{\dot{H}}{N H^2} = -1 + \frac{(d-1)}{2\left[1-\frac{2\gamma(d-1)}{(d-2)}\right]} \Big[1-\frac{2\gamma d}{(d-2)}+w_\text{m} x_\text{m}^2 + x_{\phi}^2 - x_V^2\Big]\, ,
\end{equation}
where $x_m^2$ can be obtained from the constraint equation \eqref{constraineq}. An indication of an accelerating universe is given by $q<0$. Note that unlike in the GR case, here  $ \Omega_\text{KGR}$ is not necessarily greater than zero, this depends on the density parameter $\rho_\text{m}$. For example, if $\Omega_\text{m} > 1$, we have  $ \Omega_\text{KGR} < 0$ in contrast to the case when $\Omega_\text{m} < 1$ where we have $ \Omega_\text{KGR} > 0$. 

In our model, we have five critical points related to the cosmological eras. We denote each critical point ($x_{\phi,c}$, $x_{V,c}$) as CP$i$, where $i=1,2,3,4,5$. The variables $x_{\phi,c}$ and $x_{V,c}$ represent the critical values of $x_{\phi}$ and $x_{V}$, respectively. The critical points, leading to the right-hand side of Eq. \eqref{eq:xphi} and Eq. \eqref{eq:xv} to become zero simultaneously for $\Lambda = 0$ case, are given as follows:
%
\begin{itemize}
	\item CP1 $(0,0)$:\\
    CP1 always exists for any $w_m$, $\lambda_V$. The properties of CP1 can be described by the following parameters
	\begin{align} \label{eq:qcp1}
	\begin{split}
	 \Omega_\text{KGR} &= \frac{\frac{2\gamma d}{(d-2)} - \frac{2\gamma(d-1)}{(d-2)}(1+w_\text{m})}{\left[1 - \frac{2\gamma(d-1)}{(d-2)}(1+w_\text{m})\right]}\, ,\\ 
	  q &= -1 + \frac{(d-1)\left[1-\frac{2\gamma d}{(d-2)}\right](1 + w_\text{m})}{2\left[1-\frac{2\gamma(d-1)}{(d-2)}(1 + w_\text{m})\right]}\, .
	\end{split}
	\end{align}
 The value of the equation of state parameter $w_\text{KGR}$ for the scalar field is undefined when $\gamma = 0$. Otherwise, $w_\text{KGR} =  -1$ such that it mimics dark energy which behaves like a cosmological constant. Since the energy density of the scalar field can be neglected, the role of dark energy is played by the non-conservative EMT of a perfect fluid. The existence and the stability properties of the CP1 point, which describes the universe during the matter-dominated era with coupling between the matter field and geometry via the Rastall parameter, are illustrated in Fig. \ref{fig:cp1}. In contrast to the case in general relativity, the stability of this point is determined by the parameters $w_\text{m}$ and $\gamma$. There are only two possible outcomes from this point: a past-time attractor (unstable node) and a saddle-node. The instability of this point implies that, even though it leads to an accelerating universe, it cannot represent the late-time universe.\\
 \begin{figure}[h]
 	\centering
 		\begin{subfigure}[t]{0.32\textwidth}
 			\includegraphics[width=1\textwidth]{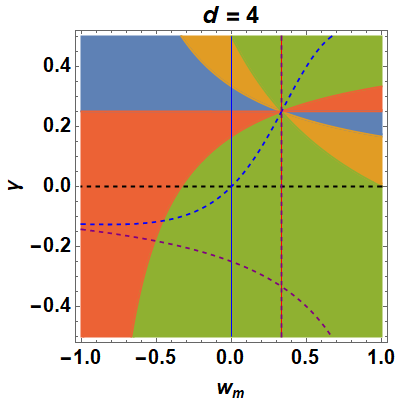}
 			\caption{$ d=4$}
 		\end{subfigure}   
 		\hfill
 		\begin{subfigure}[t]{0.32\textwidth}
 			\includegraphics[width=1\textwidth]{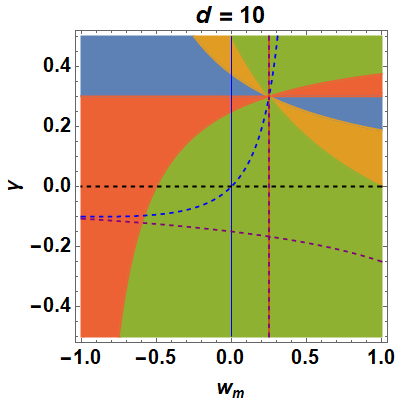}
 			\caption{$d=10$}
 		\end{subfigure} 
 		\hfill
 		\begin{subfigure}[t]{0.32\textwidth}
 			\includegraphics[width=1\textwidth]{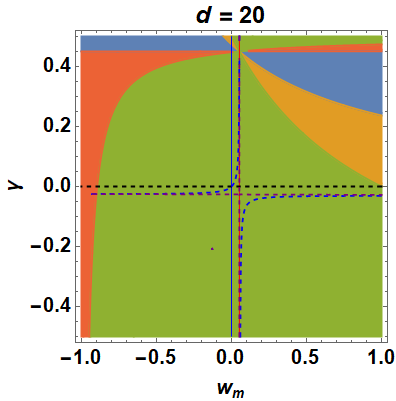}
 			\caption{$d=20$}
 		\end{subfigure}   
 		\vfill
 		\begin{subfigure}[normla]{0.25\textwidth}
 			\includegraphics[scale=0.8]{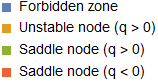}
 		\end{subfigure}        
 		\caption{The bifurcation diagrams show the existence and the stability conditions of CP1 by plotting $\gamma$ as a function of $w_\text{m}$ for a constant value of (a) $d=4$, (b) $d=10$ and (c) $d=20$ for both $\Lambda=0$ and $\Lambda\neq 0$ cases. These figures show that in a radiation-like dominated universe (red line), we always obtain saddle nodes and a non-accelerating universe for arbitrary $d\geq 4$ both in GR and Rastall framework. In a dust-like matter universe (blue line), we have saddle nodes and an accelerating universe, saddle nodes and a non-accelerating universe, or unstable nodes and a non-accelerating universe in Rastall sector $\gamma \neq 0$, whilst at GR limit we have only saddle nodes and a non-accelerating universe. From the figures, in the case of $w_\text{m}= 0$, it can be seen that the cosmic acceleration is barely obtained for large $d$ such that it tends to vanish at very large $d$. The blue-dashed and purple-dashed curves denotes $w_\text{eff} = 0$ and $w_\text{eff} = \frac{1}{(d-1)}$, respectively. Note that we have a forbidden zone giving $\Omega_\text{m} < 0$. }
 		\label{fig:cp1}    
 	
 \end{figure}
     %
    \item CP2 $\Bigg(\sqrt{1-\frac{2\gamma(1+w_m)}{(1-w_m)}} , 0\Bigg)$ and CP3 $\Bigg(-\sqrt{1-\frac{2\gamma(1+w_m)}{(1-w_m)}} , 0\Bigg)$:\\
    CP2 and CP3 exist if $w_m\neq 1$ and $1-\frac{2\gamma(1+w_m)}{(1-w_m)}>0$. The properties of CP2 and CP3 can be described by the following parameters
\begin{align}\label{eq:qcp2}
	\Omega_\text{KGR} = \frac{4\gamma + w_\text{m} - 1}{w_\text{m} - 1}\, , \quad
	q =\, d-2\,  .
\end{align}
   The equation of state parameter is
\begin{align}\label{eq:eomcp2}
	w_\text{KGR} = \frac{(1 + 4\gamma)w_\text{m} - 1}{4\gamma + w_\text{m} - 1}\, .
\end{align}
%
 Fig. \ref{fig:cp2} and Fig.  \ref{fig:cp3} display the existence and stability characteristics of CP2 and CP3, respectively. These points correspond to the kinetic-matter-dominated era, which becomes a kinetic-dominated era when $\gamma=0$. Three possible stabilities, namely stable nodes, unstable nodes, and saddle nodes, can be obtained at CP2. Although we have stable nodes, based on the deceleration parameter of CP2 $(q > 0)$, it cannot describe the late-time universe since it is disfavoured by observations. In terms of its existence and the values of $\Omega_\text{KGR}, w_\text{KGR}$, and $q$, CP3 exhibits similar properties as CP2. Only saddle points and unstable nodes exist at this critical point, leading to a non-accelerating expansion of the universe such that it also cannot describe the late-time universe.
 \begin{figure}[h]
 \centering
 	\begin{subfigure}[t]{0.32\textwidth}
 		\includegraphics[width=1\textwidth]{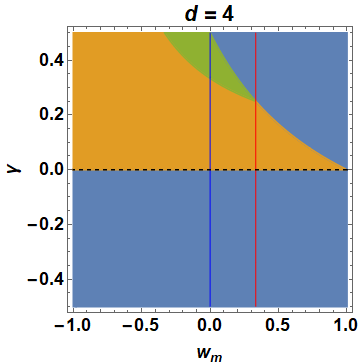}
 		\caption{$ d=4$}
 	\end{subfigure}   
 	\hfill
 	\begin{subfigure}[t]{0.32\textwidth}
 		\includegraphics[width=1\textwidth]{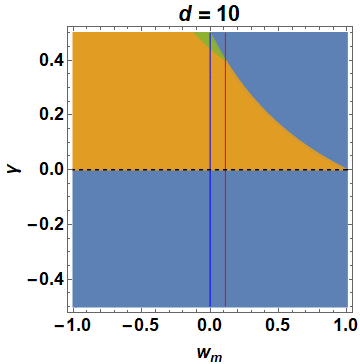}
 		\caption{$d=10$}
 	\end{subfigure} 
 	\hfill
 	\begin{subfigure}[t]{0.32\textwidth}
 		\includegraphics[width=1\textwidth]{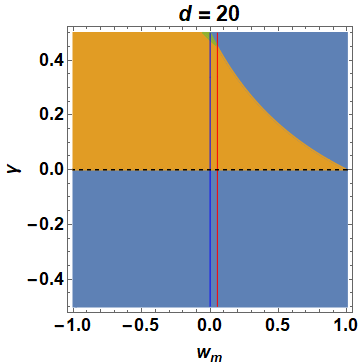}
 		\caption{$d=20$}
 	\end{subfigure}   
 	\vfill
 	\begin{subfigure}[t]{0.6\textwidth}
 	\includegraphics[width=1\textwidth]{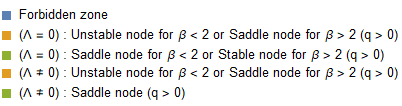}
 	\end{subfigure}      
 		\caption{ The bifurcation diagrams show the existence and the stability conditions of CP2 by plotting $\gamma$ as a function of $w_\text{m}$ for a constant value of (a) $d=4$, (b) $d=10$ and (c) $d=20$. The stability characteristics of this point are given in both $\Lambda=0$ and  $\Lambda\neq 0$ cases which can be seen in the figure legend. In the Rastall sector, these figures show that in a kinetic-radiation-like universe (red line), we obtain either unstable or saddle nodes depending on $\beta$, , where $\beta \equiv \lambda_V \sqrt{\frac{d-2}{d-1}}\sqrt{1-\frac{2\gamma(1+ w_\text{m})}{(1- w_\text{m})}}$, for both $\Lambda=0$ and  $\Lambda\neq 0$ cases. In the kinetic-dust-like matter universe (blue line), we have either saddle or stable nodes depending on parameter $\beta$ for $\Lambda=0$ case whilst it becomes only saddle for $\Lambda\neq 0$ case. At GR limit (black-dashed) we have unstable or saddle nodes depending on $\beta$ for arbitrary $\Lambda$. From the figures, in the case of $w_\text{m} =0$, it can be seen that the stable nodes nearly vanish at very large $d$ in the ranges of $\gamma < \frac{(d-2)}{2d}$.  Note that we have forbidden zones giving either $\Omega_\text{m} < 0$ or $\sqrt{1-\frac{2\gamma(1+w_\text{m})}{(1-w_\text{m})}}<0 $. }
 		\label{fig:cp2}    
 \end{figure}
   %
   \begin{figure}[t]
   	\centering
   	\begin{subfigure}[t]{0.3\textwidth}
   		\includegraphics[width=1\textwidth]{CP2d_4.png}
   		\caption{$ d=4$}
   	\end{subfigure}   
   	\hfill
   	\begin{subfigure}[t]{0.3\textwidth}
   		\includegraphics[width=1\textwidth]{CP2d_10.png}
   		\caption{$d=10$}
   	\end{subfigure} 
   	\hfill
   	\begin{subfigure}[t]{0.3\textwidth}
   		\includegraphics[width=1\textwidth]{CP2d_20.png}
   		\caption{$d=20$}
   	\end{subfigure}   
   	\vfill
   	\begin{subfigure}[t]{0.25\textwidth}
   		\includegraphics[scale=0.85]{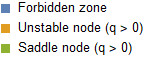}
   	\end{subfigure}              
   	\caption{ The bifurcation diagrams show the existence and the stability conditions of CP3 by plotting $\gamma$ as a function of $w_\text{m}$ for a constant value of (a) $d=4$, (b) $d=10$ and (c) $d=20$ for both $\Lambda=0$ and  $\Lambda\neq 0$ cases. In Rastall sector $(\gamma \neq 0)$, these figures show that in a kinetic-radiation-like universe (red line), we obtain unstable nodes. In a kinetic-dust-like matter universe (blue line), we have either unstable or saddle nodes. At GR limit (black dashed) we have only unstable nodes. Note that we have forbidden zones giving either $\Omega_\text{m} < 0$ or $\sqrt{1-\frac{2\gamma(1+w_\text{m})}{(1-w_\text{m})}}<0 $.}
   	\label{fig:cp3}    
   \end{figure}
\begin{figure*}[!ht]
	\centering
	\begin{subfigure}[t]{0.38\textwidth}
		\includegraphics[width=1\textwidth]{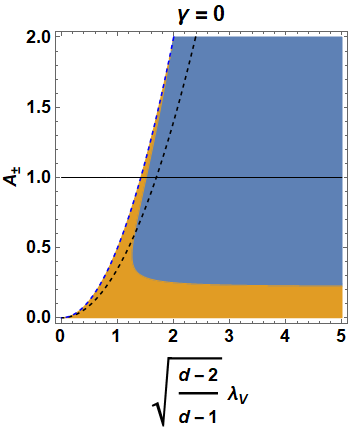}
	\end{subfigure}   \\ 
	\begin{subfigure}[t]{0.45\textwidth}
		\includegraphics[width=0.9\textwidth]{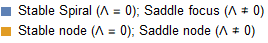}
	\end{subfigure}
	\caption{ The bifurcation diagram showing the existence and the stability conditions of CP4 by plotting $\mathcal{A}_{\pm}$ as a function of $\lambda_V$. It is known as a scaling solution in which a scalar field mimics the dominant matter in the GR limit. Such a case had been considered in \cite{copeland1998exponential} and does not gives late-time cosmic acceleration when dust matter is a subdominant constituent. Note that the black-dashed and the black-thin illustrate the $\Omega_\text{m} \approx 0.3$ and $w_\text{m} = 0$, respectively. Accelerating universe obtained for $\mathcal{A}_{\pm} < \frac{2}{(d-1)}$ correspond to $w_\text{KGR}  < -\frac{1}{(d-1)}$. Otherwise, we have a non-accelerating one. The blue-dashed curve denotes $\Omega_\text{KGR} \approx 1$ which indicates eternal power-law inflation for $\Lambda = 0$ and $q < 0$ due to its late-time attractors behaviour.}
\label{fig:cp4a} 
\end{figure*}
    \item CP4 $\Bigg(\frac{1}{\lambda_V} \sqrt{\frac{d-1}{d-2}} \mathcal{A}_{\pm} , \frac{1}{\lambda_V} \sqrt{\frac{(d-1)(2 - \mathcal{A}_{\pm}) 
    		\mathcal{A}_{\pm}}{(d-2)}}\Bigg)$:\\
    CP4 exists if  $0< \mathcal{A}_{\pm}< 2$. 
    The properties of CP4 can be described by the following parameters 
\begin{align}  \label{eq:Ocp4}
	\Omega_\text{KGR} =  \frac{2(d-1)\mathcal{A}_{\pm}[2\mathcal{A}_{\pm}(d-1)-(d-2)] + 2\gamma(d-2)\lambda_V^2 (d w_\text{m} - w_\text{m} -1)}{(d-2)\lambda_V^2[d-2-2\gamma(d-1)(1+w_\text{m})]} \, ,
\end{align}
\begin{align}\label{eq:qcp4}
	q = \frac{2\mathcal{A}_{\pm}(d-1)^2[1+w_\text{m}-\mathcal{A}_{\pm}]+(d-2)[3+w_\text{m} - 2\gamma (d-2) (1+w_\text{m})]\lambda_V^2}{2\lambda_V^2 [d-2-2\gamma(d-1)(1+w_\text{m})]}\, ,
\end{align}
\begin{align}\label{eq:eomcp4}
w_\text{KGR} =& \frac{\mathcal{A}_{\pm}(d-1)\left\{d - 2 - \mathcal{A}_{\pm}[d -2 - 2\gamma (d - 1) w_\text{m}]\right\}} { \mathcal{A}_{\pm} (d-1) \left\{2\gamma  \mathcal{A}_{\pm} (d - 1) - (d - 2)\right\} + (d - 2)\gamma (d w_\text{m} - w_\text{m} - 1)\lambda_V^2} \, \\ \nonumber
&- \frac{(d - 2)\gamma (d w_\text{m} - w_\text{m} - 1)\lambda_V^2} { \mathcal{A}_{\pm} (d-1) \left\{2\gamma  \mathcal{A}_{\pm} (d - 1) - (d - 2)\right\} + (d - 2)\gamma (d w_\text{m} - w_\text{m} - 1)\lambda_V^2}
\, ,
\end{align}
where the parameter $\mathcal{A}_{\pm}(d,w_\text{m},\gamma,\lambda_V)$ is defined as
\begin{align}
	\mathcal{A}_{\pm} \equiv&  \bigg\{\frac{1+w_\text{m}}{2}+\frac{\lambda_V^2(d-2)}{4(d-1)}\bigg[1-\frac{2\gamma(d-1)(1+w_\text{m})}{(d-2)}\bigg]\bigg\} \nonumber\\
	&\times \left\{1 \pm \sqrt{1- \frac{16\lambda_V^2(d-1)(1 + w_\text{m})(d-2-2\gamma d)}{\Big[2(d-1)(1 + w_\text{m}) + \lambda_V^2 [d-2-2\gamma(d-1)(1+ w_\text{m})]\Big]^2}}\right\} \, ,\label{Aplus}
\end{align}
such that
     %
\begin{eqnarray}
	w_\text{m} = -1 + \frac{\mathcal{A}_{\pm}[(d-2)\lambda_V^2 - 2 (d-1)\mathcal{A}_{\pm}]}{(d-2-2\gamma d)\lambda_V^2 + 2 \mathcal{A}_{\pm} (d-1)(\gamma \lambda_V^2 - 1)}\, .
	\end{eqnarray}
The existence and the stability properties of CP4 can be observed in Fig. \ref{fig:cp4t}. CP4 represents a scalar field-matter-dominated era, in which there exists a fraction between the scalar field and matter energy density. The critical point exists in the range of $0 < \mathcal{A_{\pm}} <2$. At this point, we find late-time attractors (stable nodes), past-time attractors (unstable nodes), and saddle points that depend on the parameters $\mathcal{A_{\pm}}(d,w_\text{m},\lambda_V,\gamma)$. The density parameter for CP4 is given by $\Omega_\text{m} + \Omega_\text{KGR} = 1$, such that $\Omega_\text{KGR} \leq 1$. However, there is a forbidden zone where this constraint is not satisfied, $\Omega_\text{KGR}> 1$ or $\Omega_{\text{m}} < 0$. We find that there are unstable nodes. It provides a good description of the early-time universe associated with the power-law inflation era. In the KGR theory, $\gamma \equiv \kappa \lambda \neq 0$, Eq. \eqref{eq:rastallco} prohibits $\rho_\text{m}$ to be zero. It means that even during inflation, the perfect fluid must be presented. Such models have been widely considered in \cite{barrow1993scalar, banerjee1998power}. Fortunately, we still obtain power-law inflation since the scale factor evolves according to $a(t) \propto t^\ell$ where $\ell >1$. On the other 
hand, the late-time attractor or stable nodes can also be obtained, which enables the universe to undergo accelerated expansion for $\gamma \neq 0$, when dust is the subdominant constituent. These features will be discussed further in Sec. \eqref{sec:cosmologicalmodel}. 
In Fig. \ref{fig:cp4t} we use the black-dashed to show the $\Omega_\text{KGR} \approx 0.7$ $(\Lambda$CDM-like model) curve. We also use the black-thin to show a scalar field-dust-like matter universe $(w_\text{m} = 0)$ curve. Note that uncolored regions denote a forbidden zone leading to $\Omega_\text{m} < 0$. 
\begin{figure*}[hptb!]
	\centering
	\begin{subfigure}[t]{0.32\textwidth}
		\includegraphics[width=1\textwidth]{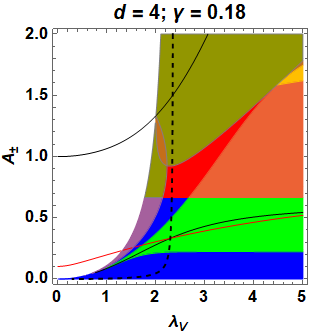}
		\caption{$d=4, \gamma = 0.18$}
	\end{subfigure}   
	\hfill
	\begin{subfigure}[t]{0.32\textwidth}
		\includegraphics[width=1\textwidth]{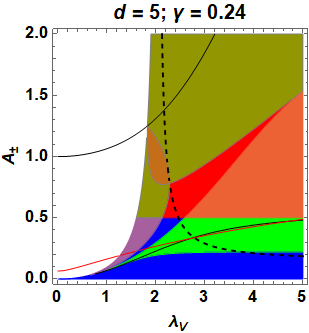}
		\caption{$d=5, \gamma = 0.24$}
	\end{subfigure} 
	\hfill
	\begin{subfigure}[t]{0.32\textwidth}
		\includegraphics[width=1\textwidth]{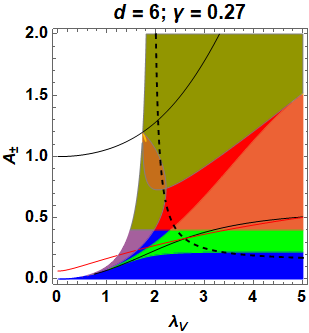}
		\caption{$d=6, \gamma = 0.27$}
	\end{subfigure}
	\vfill
	\begin{subfigure}[t]{0.32\textwidth}
		\includegraphics[width=1\textwidth]{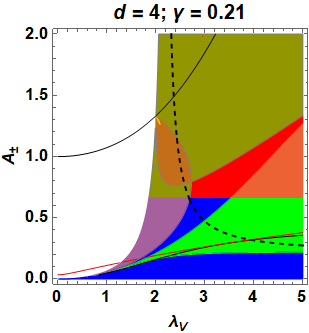}
		\caption{$d=4, \gamma = 0.21$}
	\end{subfigure}   
	\hfill
	\begin{subfigure}[t]{0.32\textwidth}
		\includegraphics[width=1\textwidth]{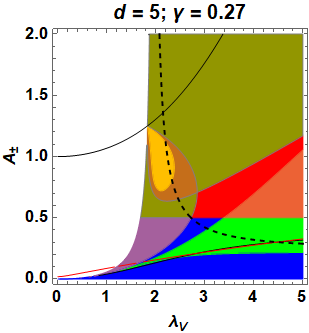}
		\caption{$d=5, \gamma = 0.27$}
	\end{subfigure} 
	\hfill
	\begin{subfigure}[t]{0.32\textwidth}
		\includegraphics[width=1\textwidth]{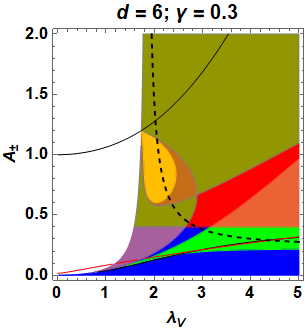}
		\caption{$d=6, \gamma = 0.3$}
	\end{subfigure}
	\vfill       
	\begin{subfigure}[t]{0.3\textwidth}
		\includegraphics[width=1\textwidth]{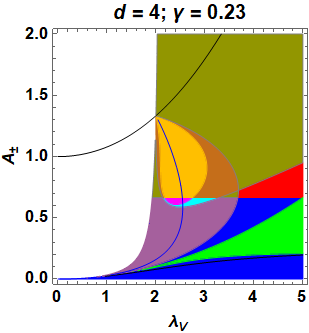}
		\caption{$d=4, \gamma = 0.23$}
    \label{subfig:g}
	\end{subfigure}	
	\hfill       
	\begin{subfigure}[t]{0.3\textwidth}
		\includegraphics[width=1\textwidth]{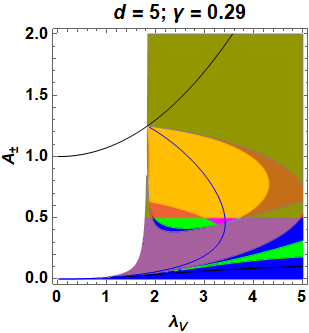}
		\caption{$d=5, \gamma = 0.29$}
    \label{subfig:h}
	\end{subfigure}	
	\hfill
	\begin{subfigure}[t]{0.3\textwidth}
		\includegraphics[width=1\textwidth]{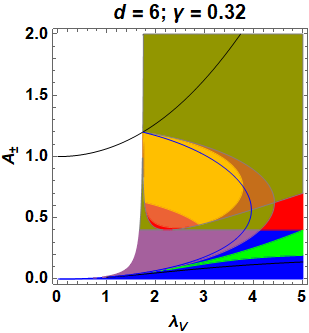}
		\caption{$d=6, \gamma = 0.32$}
    \label{subfig:i}
	\end{subfigure}	
	\vfill 
	\begin{subfigure}[t]{0.8\textwidth}
		\includegraphics[width=1\textwidth]{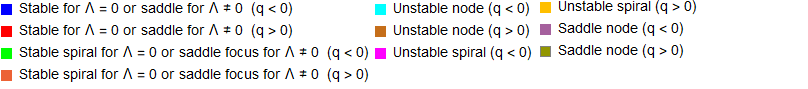}
	\end{subfigure}
	\caption{The bifurcation diagrams depicting the existence and stability conditions of CP4 are plotted with $\mathcal{A}_{\pm}$ defined in equation~\eqref{Aplus} as functions of $\lambda_V$ for the following cases: (a) $d=4, \gamma=0.18$, (b) $d=5, \gamma=0.24$, (c) $d=6, \gamma=0.27$, (d) $d=4, \gamma=0.21$, (e) $d=5, \gamma=0.27$, (f) $d=6, \gamma=0.3$, (g) $d=4, \gamma=0.24$, (h) $d=5, \gamma=0.29$, and $d=6, \gamma=0.32$. The inflation era is represented by the unstable critical points. In the region of stable points, the intersection between the black-thin and black-dashed curves indicates the acceleration in the late-time era comprising scalar fields along with baryonic and dark matter, where $\Omega_\text{m} \approx 0.3$ and $\Omega_\text{KGR} \approx 0.7$. The red curve denotes the parameter $w_\text{KGR} \approx -1$, while the blue curve in Subfig. \ref{subfig:g} (\ref{subfig:h} and \ref{subfig:i}) represents $w_\text{m} = 1$ ($w_\text{m} = 0.45$ and $ w_\text{m} = 0.28$).} 
	\label{fig:cp4t}    
\end{figure*}  
    \item CP5 $\Bigg(0,\sqrt{1-\frac{2\gamma d}{(d-2)}}\Bigg)$:\\ 
    CP5 exists if $\lambda_V=0$ and  $\gamma < \frac{(d-2)}{2d}$. The properties of CP5 can be described by 
    \begin{eqnarray}
        \Omega_\text{KGR} = 1, \quad q=-1\, .
    \end{eqnarray}
    The equation of state parameter for CP5 is
       $ w_\text{KGR} = w_\text{eff}= - 1$.
    The stability properties of CP5 can be seen in Fig. \ref{fig:cp5}. CP5 represents a universe dominated by the scalar field potential. The stabilities are stable and saddle nodes. At this point, the scalar potential term is constant when $\lambda_V = 0$ and behaves like a cosmological constant that dominates the acceleration of the expansion of the universe.
\end{itemize}
%
\section{Dynamical System Analysis: $\Lambda \neq 0$ Case}
\label{sec:Lambdaneq}

In this section, we focus on a model with $\Lambda \neq 0$, which implies the addition of a new dynamical variable related to $\Lambda$. This variable's characteristics significantly influence the history of cosmic evolution, distinguishing it from the case where $\Lambda=0$. However, in the limit of $\lambda_V\rightarrow 0$, making the scalar potential constant ($V_0$), we may observe similar properties as in the $\Lambda \neq 0$ case. Therefore, the analysis discussed in this section can be applied to this latter scenario.

Let us first define a new autonomous variable beside of \eqref{eq:autoxm}, \eqref{eq:autoxp} and \eqref{eq:autoxv} which is associated to $\tilde{\Lambda}$ term,
\begin{eqnarray}
    \label{eq:autoxA}
    x_{\Lambda(\pm)} = \sqrt{\frac{2\kappa\vert\tilde{\Lambda}_\pm\vert}{(d-1)(d-2) H^2}}\, .
\end{eqnarray}
It is important to mention that we consider two cases which are $\tilde{\Lambda}_+ > 0$ and $\tilde{\Lambda}_- < 0$. Positive and negative subscripts correspond to $\tilde{\Lambda} > 0$ and $\tilde{\Lambda} < 0$ cases, respectively. Hence, the equations of motion become,
\begin{align} \label{eq:xphia}
   \frac{2}{(d-1)N} x_{\phi}^{'} =& \left(\frac{\Big[1 - \frac{2\gamma d}{(d-2)}\Big](1+w_\text{m})}{\Big[1-\frac{2\gamma(d-1)}{(d-2)}(1+w_\text{m})\Big]}-2\right)x_{\phi} + \lambda_V \sqrt{\frac{(d-2)}{(d-1)}} x_V^2 \nonumber \\
   &+\frac{(1-w_\text{m})}{\Big[1-\frac{2\gamma(d-1)}{(d-2)}(1+w_\text{m})\Big]} x_{\phi}^3 - \frac{(1+w_\text{m})}{\Big[1-\frac{2\gamma(d-1)}{d-2}(1+w_\text{m})\Big]} x_{\phi} x_{V\Lambda(\pm)}^2 \, ,\\ \nonumber \\
   \label{eq:xva}\frac{2}{(d-1)N} x_V^{'} =& \frac{\Big[1-\frac{2\gamma d}{(d-2)}\Big] (1+w_\text{m})}{\Big[1-\frac{2\gamma(d-1)}{(d-2)}(1+w_\text{m})\Big]} x_V  +\frac{(1 - w_\text{m})}{\Big[1 - \frac{2\gamma(d-1)}{(d-2)}(1 + w_\text{m})\Big]} x_V x_{\phi}^2 \nonumber\\
    &-\lambda_V \sqrt{\frac{(d-2)}{(d-1)}} x_{\phi} x_V - \frac{(1 + w_\text{m})}{\Big[1 -\frac{2\gamma(d-1)}{d-2}(1 + w_\text{m})\Big]} x_V x_{V\Lambda(\pm)}^2 \, ,\\ \nonumber \\
    \label{eq:xLambdaa}\frac{2}{(d-1)N} x_{\Lambda(\pm)}^{'} =& \frac{\left[1 - \frac{2\gamma d}{(d-2)}\right](1 + w_\text{m})}{\Big[1 - \frac{2\gamma(d-1)}{(d-2)}(1 + w_\text{m})\Big]} x_{\Lambda(\pm)} + \frac{(1 - w_\text{m})}{\Big[1 - \frac{2\gamma(d-1)}{(d-2)}(1+w_m)\Big]} x_{\Lambda(\pm)} x_{\phi}^2 \nonumber \\
    &-\frac{(1+w_\text{m})}{\Big[1 - \frac{2\gamma(d-1)}{d-2}(1 + w_\text{m})\Big]}x_{\Lambda(\pm)}x_{V\Lambda(\pm)}^2 \, .
\end{align}
Note that we have introduced the variable $x_{V\Lambda(\pm)}^2$, which satisfies the following conditions,
\begin{eqnarray}
x_{V\Lambda(\pm)}^2 = \begin{cases}
x_V^2 + x_{\Lambda(+)} ^2, & \text{for}\ \tilde{\Lambda}_+ \\
x_V^2 - x_{\Lambda(-)}^2, & \text{for}\ \tilde{\Lambda}_- 
\end{cases} \, .
\end{eqnarray}
Then, the constraint equation becomes,
\begin{equation}
\label{constraineq1}
    \left[1-\frac{2\gamma(d-1)}{(d-2)}(1+w_m)\right] x_m^2 + \left[1 - \frac{4\gamma(d-1)}{(d-2)}\right] x_{\phi}^2  + x_{V\Lambda(\pm)}^2 = 1 - \frac{2\gamma d}{(d-2)}\, ,
\end{equation}
which can be written in terms of the density parameters as,
\begin{equation}
     \Omega_m + \Omega_\text{KGR} \pm \Omega_{\Lambda\, (\pm)} =  1 \, ,
\end{equation}
where we have defined a density parameter of the scalar field $\Omega_\text{KGR}$ as in equation \eqref{eq:Omegaphi}. Here, we have a new density parameter $\Omega_{\Lambda\, (\pm)}$ defined as 
\begin{align}
\Omega_{\Lambda \, (\pm)} \equiv & \frac{2\kappa\vert\tilde{\Lambda}_\pm\vert}{(d-1)(d-2) H^2}= x_{\Lambda(\pm)}^2 \, .
\end{align}
The deceleration parameter becomes
\begin{equation}
    q = -1 + \frac{(d-1)}{2 \Big[1-\frac{2\gamma(d-1)}{(d-2)}\Big]} \left[1 - \frac{2\gamma d}{(d-2)}+w_\text{m} x_m^2 + x_{\phi}^2 - x_{V\Lambda(\pm)}^2\right]\, ,
\end{equation}
where $x_m^2$ can be obtained from the constraint equation \eqref{constraineq1}. In this model, we have five critical points. We denote each critical point ($x_{\phi,c}$, $x_{V,c}$, $x_{\Lambda_{,c(\pm)}}$) as CP$i$ where $i=1,2,3,4,5$. The variables $x_{\phi,c}$, $x_{V,c}$ and $x_{\Lambda_{,c(\pm)}}$ represent the critical values of $x_{\phi}$, $x_{V}$ and $x_{\Lambda_{(\pm)}}$, respectively, which trigger the right-hand side of Eq. \eqref{eq:xphia}-\eqref{eq:xLambdaa} to become zero. The critical points for $\Lambda \neq 0$ case are given by the following lists: 

\begin{itemize}
    \item CP1 $(0,0,0)$\\
      The equation of state parameter for the scalar field $w_\text{KGR}$ is undefined if $ \gamma = 0$. Otherwise, $w_\text{KGR} = -1$. The existence and the stability properties of CP1 can be seen in Fig. \ref{fig:cp1}. In this case, $\Lambda\neq 0$, it describes the matter-dominated era in which it has the same properties and behavior as  CP1 in the absence of $\Lambda$. CP1 has the density parameter $\Omega_\text{KGR}$ and the deceleration $q$ given by \eqref{eq:qcp1}.
    \item CP2 $\left(\sqrt{1-\frac{2\gamma(1+w_\text{m})}{(1-w_\text{m})}},0,0\right)$ and CP3 $\left(-\sqrt{1-\frac{2\gamma(1+w_\text{m})}{(1-w_\text{m})}},0,0\right)$:\\ \\
    CP2 and CP3 exist if $w_m\neq 1$ and $1-\frac{2\gamma(1+w_\text{m})}{(1 - w_\text{m})}>0$.
    The equation of state parameter is given by equation \eqref{eq:eomcp2}. The stability properties of CP2 and CP3 can be seen in Fig. \ref{fig:cp2} and Fig. \ref{fig:cp3}, respectively. For the $\Lambda\neq 0$ case, CP2 and CP3 describe the kinetic-matter dominated era in which it has no stable nodes or late-time attractors as obtained in the $\Lambda = 0$ case. Two possible stabilities, namely the unstable nodes and the saddle nodes, can be obtained at these points depending on the parameter values. CP2 and CP3 have the density parameter $\Omega_\text{KGR}$ and the deceleration parameter $q$ given by  \eqref{eq:qcp2}.
    \item CP4 $\Bigg(\frac{1}{\lambda_V} \sqrt{\frac{d-1}{d-2}} \mathcal{A}_{\pm} , \frac{1}{\lambda_V} \sqrt{\frac{(d-1)(2 - \mathcal{A}_{\pm}) 
    		\mathcal{A}_{\pm}}{(d-2)}}, 0\Bigg)$:\\ \\
    CP4 exists if  $0 < \mathcal{A}_{\pm} < 2$. The equation of state parameter is given by equation \eqref{eq:eomcp4}. The parameter $\mathcal{A}_{\pm}$ is defined as equation \eqref{Aplus}. This point has the density parameters $\Omega_\text{KGR}$ and $q$ given by equation \eqref{eq:Ocp4} and \eqref{eq:qcp4}. The stability properties of CP4 are shown in Fig. \ref{fig:cp4t}. The stability properties and behavior of CP4 found for $\Lambda\neq 0$ are distinct from the $\Lambda=0$ case in which we do not obtain late-time attractors. It implies that CP4 cannot be a late-time model of the universe due to the none existence of stable nodes. However, this point remains to describe the early universe known as power-law inflation.
    \item CP5 $\left(0,\sqrt{1-\frac{2\gamma d}{(d-2)}\mp x_{\Lambda,c(\pm)}^2},x_{\Lambda,c \, (\pm)}\right)$:\\ \\
    CP5 exists for $\lambda_V=0$, $0\leq x_{\Lambda,c\,(+)}\leq \sqrt{1-\frac{2\gamma d}{(d-2)}}$ for $\tilde{\Lambda} > 0$ and $ x_{\Lambda,\,c(-)}\geq 0$ for $\tilde{\Lambda} < 0$. The properties of CP5 can be described by the following parameters,
    \begin{eqnarray}
   \Omega_\text{KGR} \pm  \Omega_{\Lambda\, (\pm)} = 1, \quad q=-1 \, .
   \end{eqnarray}
   The stability properties of CP5 can be seen in Fig. \ref{fig:cp5} and Fig. \ref{fig:cp5a}. CP5 depicts a cosmological constant-like dominated accelerating universe, which is contributed by a constant potential term of the scalar field, $\lambda_V = 0$, and $\Lambda$. The stability is either incomplete stable or unstable. The incomplete stable indicate the late-time universe where the interplay between massless scalar field and cosmological constant acts as the dark energy. The unstable may indicate the early-time attractors known as conventional inflation. The effective equation of state parameter in this critical point is
   \begin{equation}
       w_{\text{eff}} \equiv \frac{p_\text{KGR}+p_{\Lambda}}{\rho_\text{KGR}+\rho_{\Lambda}} = -1 \, .
   \end{equation}
\begin{figure}[t]
	\centering
	\begin{subfigure}[t]{0.32\textwidth}
		\includegraphics[width=1\textwidth]{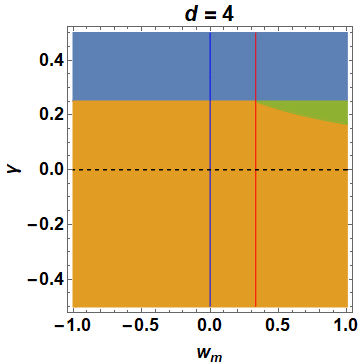}
		\caption{$ d=4$}
	\end{subfigure}   
	\hfill
	\begin{subfigure}[t]{0.32\textwidth}
		\includegraphics[width=1\textwidth]{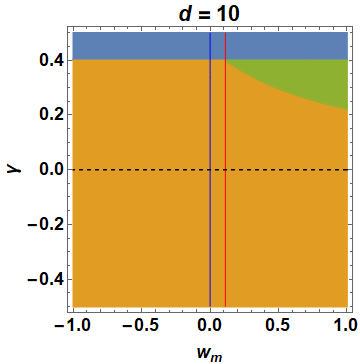}
		\caption{$d=10$}
	\end{subfigure} 
	\hfill
	\begin{subfigure}[t]{0.32\textwidth}
		\includegraphics[width=1\textwidth]{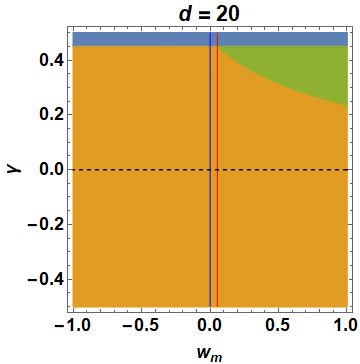}
		\caption{$d=20$}
	\end{subfigure}   
	\vfill
	\begin{subfigure}[t]{0.5\textwidth}
		\includegraphics[width=1\textwidth]{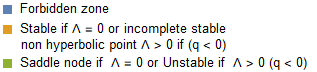}
	\end{subfigure}              
	\caption{The bifurcation diagrams show the existence and the stability conditions of CP5. The stability properties of CP5 by plotting $\gamma$ as a function of $w_\text{m}$ for constant value of (a) $d=4$, (b) $d=10$ and (c) $d=20$. The stability characteristics of this point are given in both $\Lambda=0$ and $\Lambda > 0$ cases which can be seen in the figure legends. The blue line and the red line represent $w_\text{m} = 0$ and $w_\text{m} = \frac{1}{(d-1)}$, respectively. Note that we have a forbidden zone giving $\sqrt{1-\frac{2\gamma d}{(d-2)}} < 0 $ for $\Lambda=0$ or $\sqrt{1-\frac{2\gamma d}{(d-2)}-x_{\Lambda,c\,(+)}^2} < 0 $ for $\Lambda > 0$ case.}
	\label{fig:cp5}    
\end{figure}
\begin{figure}[t]
	\centering
	\begin{subfigure}[t]{0.32\textwidth}
		\includegraphics[width=1\textwidth]{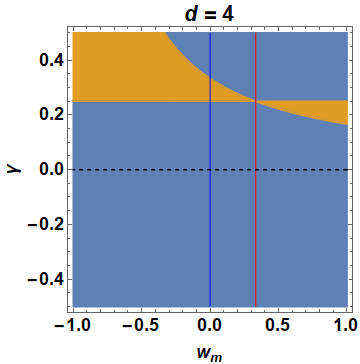}
		\caption{$ d=4$}
	\end{subfigure}   
	\hfill
	\begin{subfigure}[t]{0.32\textwidth}
		\includegraphics[width=1\textwidth]{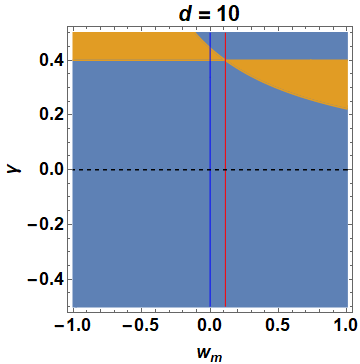}
		\caption{$d=10$}
	\end{subfigure} 
	\hfill
	\begin{subfigure}[t]{0.32\textwidth}
		\includegraphics[width=1\textwidth]{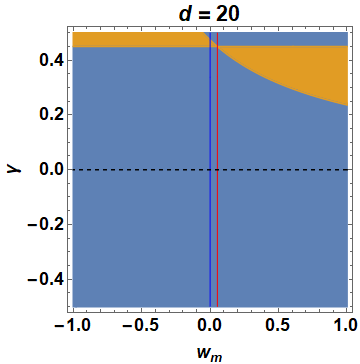}
		\caption{$d=20$}
	\end{subfigure}   
	\vfill
	\begin{subfigure}[t]{0.45\textwidth}
		\includegraphics[width=1\textwidth]{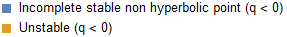}
	\end{subfigure}              
	\caption{The bifurcation diagrams show the existence and the stability conditions of CP5. The stability properties of CP5 by plotting $\gamma$ as a function of $w_\text{m}$ for constant value of (a) $d=4$, (b) $d=10$ and (c) $d=20$. The stability characteristics of this point are given in both $\Lambda=0$ and $\Lambda > 0$ cases which can be seen in the figure legends. The blue line and the red line represent $w_\text{m} = 0$ and $w_\text{m} = \frac{1}{(d-1)}$, respectively. Note that we have a forbidden zone giving $\sqrt{1-\frac{2\gamma d}{(d-2)}} < 0 $ for $\Lambda=0$ or $\sqrt{1-\frac{2\gamma d}{(d-2)}-x_{\Lambda,c\,(+)}^2} < 0 $ for $\Lambda > 0$ case.}
	\label{fig:cp5a}    
\end{figure}
\end{itemize}
\section{Local-Global Existence of Solutions}
\label{sec:LocGlobExis}

In this Section, we establish the local-global existence and the uniqueness of the evolution equations in the preceding sections using Picard's iteration and the contraction mapping properties. We begin the construction by considering the $\Lambda = 0$ case, and then, the $\Lambda \neq 0$ case.  

First of all, let us consider the $\Lambda = 0$ case in which we define the dynamical variables
\begin{equation}
    \bm{u} = \begin{pmatrix} x_\phi \\ x_V  \end{pmatrix}, \label{dynvar}
\end{equation}
 on an interval $I \equiv [s, s+ \epsilon]$ where $s \equiv \ln{a} \in \lR$ and $\varepsilon$ is a small positive constant. If all constants in \eqref{constraineq} are positive, that is, it satisfies
\begin{equation}\label{eq:gammacond}
\gamma \le \frac{d-2}{4(d-1)} ~ , \quad \gamma(1+w_\text{m}) \le \frac{d-2}{2(d-1)} ~ ,
\end{equation}
 then, we could have
\begin{equation}
    \left.
    \begin{array}{cc}
        0 \leq  \left|1-\frac{4\gamma(d-1)}{(d-2)}\right|^{\frac{1}{2}} | x_\phi | \leq  \left| 1-\frac{2\gamma d}{(d-2)} \right|^{\frac{1}{2}} & ~ , \\
        0 \leq | x_V | \leq \left| 1-\frac{2\gamma d}{(d-2)} \right|^{\frac{1}{2}} &  ~ ,\\
            \end{array}
    \right. \label{eq:syaratvardin}
\end{equation}
implying that  all of the quantities $\left( \left|1-\frac{4\gamma(d-1)}{(d-2)}\right|^{\frac{1}{2}}  x_\phi, x_V \right)$ are defined on an open set $U \subset S^2$ where $S^2$ is the 2-sphere with radius $ \left| 1-\frac{2\gamma d}{(d-2)} \right|^{\frac{1}{2}}$.

Without loss of generality, we could set the lapse function $N =1$ so that  the evolution equations \eqref{eq:xphi} and  \eqref{eq:xv} can be simply rewritten into
\begin{equation}
    \frac{d\bm{u}}{ds} = \mathcal{J}(\bm{u}) ~ , \label{fungsiJ}
\end{equation}
with
\begin{eqnarray}\label{JY}
    \mathcal{J}(\bm{u}) \equiv \frac{1}{2} (d-1) \left( \begin{array}{c}
        \left(\frac{\Big[1 - \frac{2\gamma d}{(d-2)}\Big](1+w_\text{m})}{\Big[1-\frac{2\gamma(d-1)}{(d-2)}(1+w_\text{m})\Big]}-2\right)x_{\phi} + \lambda_V \sqrt{\frac{d-2}{d-1}} x_V^2  \\
   + \frac{(1-w_\text{m})}{\Big[1-\frac{2\gamma(d-1)}{(d-2)}(1+w_\text{m})\Big]} x_{\phi}^3 - \frac{(1+w_\text{m})}{\Big[1-\frac{2\gamma(d-1)}{(d-2)}(1+w_\text{m})\Big]} x_{\phi} x_V^2 \\ \\
        \frac{\Big[1-\frac{2\gamma d}{(d-2)}\Big] (1+w_\text{m})}{\Big[1-\frac{2\gamma(d-1)}{(d-2)}(1+w_\text{m})\Big]} x_V - \lambda_V \sqrt{\frac{d-2}{d-1}} x_{\phi} x_V \\
    +\frac{(1-w_\text{m})}{\Big[1-\frac{2\gamma(d-1)}{(d-2)}(1+w_\text{m})\Big]} x_V x_{\phi}^2 - \frac{(1+w_\text{m})}{\Big[1-\frac{2\gamma(d-1)}{(d-2)}(1+w_\text{m})\Big]} x_V^3 \\ \\
\end{array} \right).
\end{eqnarray}

\begin{lemma} \label{opJY}
    The operator $ \mathcal{J}(\bm{u})$ in Eq.~\eqref{fungsiJ} is locally Lipschitz with respect to $\bm{u}$.
\end{lemma}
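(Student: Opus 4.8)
The plan is to exploit the polynomial structure of $\mathcal{J}$. Each of its two components, read off from \eqref{JY}, is a cubic polynomial in the scalars $x_\phi$ and $x_V$ whose coefficients are built only from the fixed parameters $d$, $w_\text{m}$, $\gamma$ and $\lambda_V$. Under the hypotheses \eqref{eq:gammacond} the denominators $\big[1-\frac{2\gamma(d-1)}{(d-2)}(1+w_\text{m})\big]$ are strictly positive, so every coefficient is a finite real constant; I would denote their collective maximum modulus by $C_0$. Recording this first makes $\mathcal{J}$ manifestly a $C^\infty$ map on $\lR^2$ with no singularities in the admissible region, which is the structural fact all subsequent estimates rest on.

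Since a polynomial map is $C^1$, one clean route is the mean value inequality: on any closed ball $\overline{B_r(\bm{u}_0)}$ the continuous Jacobian $D\mathcal{J}$ is bounded, say by $L$, whence $\|\mathcal{J}(\bm{u}) - \mathcal{J}(\bm{v})\| \le L\,\|\bm{u}-\bm{v}\|$ for all $\bm{u},\bm{v}$ in that ball, which already establishes the claim. I would prefer, however, to give the explicit elementary estimate, because the resulting constant feeds directly into the Picard iteration and the contraction-mapping argument used in the lemmas that follow.

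For that explicit estimate I would fix $\bm{u}=(x_\phi,x_V)$ and $\bm{v}=(y_\phi,y_V)$ with $\|\bm{u}\|,\|\bm{v}\| \le M$ and reduce each monomial difference to a factor $\|\bm{u}-\bm{v}\|$ times a bounded remainder, using the identities $a^2-b^2=(a-b)(a+b)$, $a^3-b^3=(a-b)(a^2+ab+b^2)$, and for the mixed terms $a c^2 - b d^2 = (a-b)c^2 + b(c-d)(c+d)$. Each remainder is then dominated by a polynomial in $M$, yielding
\begin{equation}
\|\mathcal{J}(\bm{u})-\mathcal{J}(\bm{v})\| \le L(M)\,\|\bm{u}-\bm{v}\|, \qquad L(M) = K\,C_0\,(1+M+M^2),
\end{equation}
with $K$ an absolute combinatorial constant. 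Because the admissible phase space is in fact bounded by \eqref{eq:syaratvardin} — it lies in the open set $U\subset S^2$ of radius $\left|1-\frac{2\gamma d}{(d-2)}\right|^{1/2}$ — one may even take $M$ to be this radius and obtain a single Lipschitz constant valid on the whole physical region, which is stronger than the local statement asked for.

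The only genuine care needed, and the single place where \eqref{eq:gammacond} is essential, is ensuring the Rastall-dependent denominators never vanish so that the coefficients of $\mathcal{J}$ stay finite; everything else is routine bookkeeping of the cubic terms. I therefore do not expect any analytic difficulty, only the need to be systematic when collecting the monomial bounds and to fix $M$ explicitly so that $L$ is in usable form for the contraction step that comes next.
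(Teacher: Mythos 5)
Your proposal is correct and takes essentially the same route as the paper: the paper bounds $|\mathcal{J}|$ on $U$ via the constraint bounds \eqref{eq:syaratvardin}, writes the term-by-term difference estimate \eqref{estJYLps}, and concludes ``after some computations'' that $|\mathcal{J}(\bm{u})-\mathcal{J}(\hat{\bm{u}})|_U \le C_{\mathcal{J}}(|\bm{u}|,|\hat{\bm{u}}|)\,|\bm{u}-\hat{\bm{u}}|$ --- your factoring identities for the monomial differences are precisely those omitted computations. The $C^1$/mean-value alternative and the explicit constant $L(M)$ you supply are extras beyond what the paper records, but the core argument is identical.
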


\begin{proof}
We have the following estimate
\begin{eqnarray}
    && | \mathcal{J} |_U \leq \frac{1}{2} (d-1) \left[ \left| \frac{\Big[1 - \frac{2\gamma d}{(d-2)}\Big](1+w_\text{m})}{\Big[1-\frac{2\gamma(d-1)}{(d-2)}(1+w_\text{m})\Big]}-2\right| |x_{\phi}| + |\lambda_V| \sqrt{\frac{d-2}{d-1}} |x_V|^2\right. \nonumber \\
    && \qquad + \, \left| \frac{(1-w_\text{m})}{\Big[1-\frac{2\gamma(d-1)}{(d-2)}(1+w_\text{m})\Big]}\right| |x_{\phi}|^3  + \left| \frac{(1+w_\text{m})}{\Big[1-\frac{2\gamma(d-1)}{(d-2)}(1+w_\text{m})\Big]}\right|  |x_{\phi}| |x_V|^2 \nonumber \\
    && \qquad +  \left| \frac{\Big[1-\frac{2\gamma d}{(d-2)}\Big] (1+w_\text{m})}{\Big[1-\frac{2\gamma(d-1)}{(d-2)}(1+w_\text{m})\Big]}\right| |x_V| +  |\lambda_V | \sqrt{\frac{d-2}{d-1}} |x_{\phi}| |x_V| \nonumber \\
    && \qquad \left. + \, \left| \frac{(1-w_\text{m})}{\Big[1-\frac{2\gamma(d-1)}{(d-2)}(1+w_\text{m})\Big]} \right| |x_V| |x_{\phi}|^2  + \left| \frac{(1+w_\text{m})}{\Big[1-\frac{2\gamma(d-1)}{(d-2)}(1+w_\text{m})\Big]} \right| |x_V|^3 \right]. \label{estJY}
\end{eqnarray}
Then, using Eq.~\eqref{eq:syaratvardin}, we can show that $| \mathcal{J}(\bm{u}) |_U$ is indeed bounded on $U$.

Moreover, for $\bm{u}, \hat{\bm{u}} \in U$ we have
\begin{eqnarray}\label{estJYLps}
  \hspace{-0.2cm}  | \mathcal{J}(\bm{u}) - \mathcal{J}(\hat{\bm{u}}) |_U &\leq& \frac{1}{2} (d-1) \left[ \left| \frac{\Big[1 - \frac{2\gamma d}{(d-2)}\Big](1+w_\text{m})}{\Big[1-\frac{2\gamma(d-1)}{(d-2)}(1+w_\text{m})\Big]}-2\right| |x_{\phi} - \hat{x}_{\phi}| \right. \nonumber \\
    && \!\!\!\!\! + |\lambda_V| \sqrt{\frac{d-2}{d-1}} \big(|x^2_V -\hat{x}^2_V| + |x_{\phi} x_V - \hat{x}_{\phi} \hat{x}_V|\big)  \nonumber \\
    && \!\!\!\!\! +  \left| \frac{(1-w_\text{m})}{\Big[1-\frac{2\gamma(d-1)}{(d-2)}(1+w_\text{m})\Big]} \right| \big(|x^3_{\phi} - \hat{x}^3_{\phi}| + |x_V x^2_{\phi} - \hat{x}_V \hat{x}^2_{\phi}|\big) \nonumber \\
    &&\!\!\!\!\! + \left| \frac{(1+w_\text{m})}{\Big[1-\frac{2\gamma(d-1)}{(d-2)}(1+w_\text{m})\Big]} \right| \big(|x_{\phi} x^2_V - \hat{x}_{\phi} \hat{x}^2_V| + |x^3_V - \hat{x}^3_V|\big)  \nonumber \\
    &&\!\!\!\!\! \left. +  \left| \frac{\Big[1-\frac{2\gamma d}{(d-2)}\Big] (1+w_\text{m})}{\Big[1-\frac{2\gamma(d-1)}{(d-2)}(1+w_\text{m})\Big]}\right| |x_V -\hat{x}_V| \right] . 
\end{eqnarray}
After some computations, we could show that $\mathcal{J}$ is locally Lipschitz with respect to $\bm{u}$, namely,
\begin{equation}
    \left| \mathcal{J}(\bm{u}) - \mathcal{J}(\hat{\bm{u}}) \right|_U \le C_{\mathcal{J}}(| \bm{u} |, | \hat{\bm{u}} |) | \bm{u} - \hat{\bm{u}} | ~ , \label{localLipshitzcon}
\end{equation}
where $C_{\mathcal{J}} (| \bm{u} |, | \hat{\bm{u}} |)$ is a local Lipschitz constant.
\end{proof}

Next, the integral form of Eq.~\eqref{fungsiJ} is given by
\begin{equation}
    \bm{u}(s) = \bm{u}(s_0) + \int_{s_0}^s \, \mathcal{J} \left( \bm{u}(\hat{s}) \right) d\hat{s} ~ . \label{IntegralEquation}
\end{equation}
By defining a Banach space
\begin{equation}
    X \equiv \{ \bm{u} \in C(I, \mathbb{R}^2) : \, \bm{u}(x_0) = \bm{u}_{0}, \, \sup_{x \in I}{| \bm{u}(x) |} \leq L_0 \} ~ ,
\end{equation}
endowed with the norm
\begin{equation}
    | \bm{u} |_{X} = \sup_{x \in I}{|\bm{u}(x)|} ~ ,
\end{equation}
where $L_0 > 0$, we introduce an operator $\mathcal{K}$
\begin{equation}
    \mathcal{K}(\bm{u}(x)) = \bm{u}_{0} + \int_{x_0}^x \mathcal{J} \left( \bm{u}(s), s \right) ds ~ , \label{OpKdefinition}
\end{equation}
and using Lemma \ref{opJY}, we have the following result \cite{akbar2015existence}:
\begin{lemma} \label{uniqueness}
    Let $\mathcal{K}$ be an operator defined in Eq.~\eqref{OpKdefinition}. Suppose there exists a constant $\varepsilon > 0$ such that $\mathcal{K}$ is a mapping from $X$ to itself and $\mathcal{K}$ is a contraction mapping on $I = [x, x + \varepsilon]$ with
    \begin{equation}
        \varepsilon \leq \min \left( \frac{1}{C_{L_0}}, \frac{1}{C_{L_0} L_0 + \| \mathcal{J}(x) \|} \right) ~ .
    \end{equation}
    Then, the operator $\mathcal{K}$ is a contraction mapping on $X$.
\end{lemma}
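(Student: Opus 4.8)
The plan is to verify directly, from the integral representation \eqref{OpKdefinition} and the local Lipschitz bound \eqref{localLipshitzcon} supplied by Lemma \ref{opJY}, the two defining properties of a contraction on the complete metric space $X$: that $\mathcal{K}$ is a self-map and that it strictly contracts distances. First I would confirm $\mathcal{K}(\bm{u}) \in X$ whenever $\bm{u} \in X$. Continuity of $\mathcal{K}(\bm{u})$ on $I$ is immediate, being the sum of the constant $\bm{u}_0$ and an indefinite integral of a continuous integrand, and the initial condition $\mathcal{K}(\bm{u})(x_0) = \bm{u}_0$ holds by construction. The nontrivial ingredient is the uniform bound: estimating
\begin{equation}
    | \mathcal{K}(\bm{u})(x) | \leq | \bm{u}_0 | + \int_{x_0}^x | \mathcal{J}(\bm{u}(s)) | \, ds \leq | \bm{u}_0 | + \varepsilon \, \sup_{s \in I} | \mathcal{J}(\bm{u}(s)) | \, ,
\end{equation}
and then controlling $| \mathcal{J}(\bm{u}) |$ by splitting $\mathcal{J}(\bm{u}) = \big[\mathcal{J}(\bm{u}) - \mathcal{J}(\bm{u}_0)\big] + \mathcal{J}(\bm{u}_0)$ and invoking \eqref{localLipshitzcon} to get $| \mathcal{J}(\bm{u}) | \lesssim C_{L_0} L_0 + \| \mathcal{J}(x) \|$ on the ball $\{ |\bm{u}| \leq L_0 \}$. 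The first entry of the $\min$ bounding $\varepsilon$ is exactly what keeps the integral contribution under control; choosing $L_0$ suitably (e.g. $L_0 \geq |\bm{u}_0| + 1$) then yields $\sup_{x \in I} | \mathcal{K}(\bm{u})(x) | \leq L_0$, so that $\mathcal{K}(\bm{u})$ remains in $X$.

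Next I would establish the contraction estimate. For $\bm{u}, \hat{\bm{u}} \in X$, subtracting the integral forms and applying \eqref{localLipshitzcon} pointwise in $s$ gives
\begin{equation}
    | \mathcal{K}(\bm{u})(x) - \mathcal{K}(\hat{\bm{u}})(x) | \leq \int_{x_0}^x | \mathcal{J}(\bm{u}(s)) - \mathcal{J}(\hat{\bm{u}}(s)) | \, ds \leq C_{L_0} \int_{x_0}^x | \bm{u}(s) - \hat{\bm{u}}(s) | \, ds \, ,
\end{equation}
where the argument-dependent prefactor $C_{\mathcal{J}}(|\bm{u}|, |\hat{\bm{u}}|)$ is absorbed into the uniform constant $C_{L_0}$ valid on the ball of radius $L_0$. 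Bounding the integrand by its supremum and the interval length by $\varepsilon$, and taking the supremum over $x \in I$, yields
\begin{equation}
    | \mathcal{K}(\bm{u}) - \mathcal{K}(\hat{\bm{u}}) |_X \leq C_{L_0} \, \varepsilon \, | \bm{u} - \hat{\bm{u}} |_X \, .
\end{equation}
The second entry of the $\min$, namely $\varepsilon \leq 1 / C_{L_0}$, forces the prefactor $C_{L_0}\varepsilon \leq 1$; to secure a \emph{strict} contraction I would take this inequality strict (or shrink $\varepsilon$ to some $\varepsilon' < 1/C_{L_0}$), giving a Lipschitz constant $k \equiv C_{L_0}\varepsilon < 1$.

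The main obstacle I anticipate is the consistent bookkeeping of the constant $C_{L_0}$, which must simultaneously play two roles: it is the uniform Lipschitz constant on the closed ball $\{ |\bm{u}| \leq L_0 \}$ governing the contraction estimate, and it enters the self-map bound through $C_{L_0} L_0 + \| \mathcal{J}(x) \|$. Since $C_{\mathcal{J}}$ in \eqref{localLipshitzcon} depends on the norms of its arguments, one must fix $L_0$ first and then set $C_{L_0} \equiv \sup\{ C_{\mathcal{J}}(|\bm{u}|, |\hat{\bm{u}}|) : |\bm{u}|, |\hat{\bm{u}}| \leq L_0 \}$, which is finite because the coefficients appearing in \eqref{JY} are smooth on the bounded admissible region \eqref{eq:syaratvardin}. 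Once both entries of the $\min$ are justified in this way, completeness of $X$ together with the Banach fixed point theorem delivers the unique fixed point of $\mathcal{K}$, i.e. the local solution of \eqref{fungsiJ}; that existence-uniqueness conclusion, however, properly belongs to the theorem following this lemma rather than to the lemma itself.
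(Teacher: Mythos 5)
Your proof is correct: it is the standard self-map plus contraction-estimate verification underlying the Banach fixed-point theorem, which is exactly what this lemma encapsulates, and it matches the route the paper intends (Picard iteration with the Lipschitz bound \eqref{localLipshitzcon} from Lemma \ref{opJY}). Note that the paper itself supplies no proof of Lemma \ref{uniqueness} --- it is asserted with only the citation \cite{akbar2015existence} --- so your argument fills in precisely the reasoning left implicit; moreover, your two caveats are genuine sharpenings of the loosely stated bound, namely that $\varepsilon \le 1/C_{L_0}$ only yields $C_{L_0}\varepsilon \le 1$ (a non-expansive map), so a strict inequality or a smaller $\varepsilon$ is needed for a strict contraction, and that $C_{L_0}$ must be fixed in advance as the uniform Lipschitz constant of $\mathcal{J}$ on the ball $\left\{ |\bm{u}| \le L_0 \right\}$ before either entry of the $\min$ is meaningful.
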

\noindent which shows the existence of a unique fixed point of Eq.~\eqref{OpKdefinition} ensuring a unique local solution of the differential equation \eqref{fungsiJ}. One can further establish a maximal solution by repeating the above arguments of the local existence with the initial condition $\bm{u}(x - x_n)$ for some $x_0 < x_n < x$ and using the uniqueness condition to glue the solutions.

To show the global existence of  Eq.~\eqref{fungsiJ}, let us first consider an inequality coming from \eqref{IntegralEquation} 
\begin{equation}
    | \bm{u}(s) | \le | \bm{u}(s_0) | + \int_{s_0}^s | \mathcal{J} \left( \bm{u}(\hat{s}) \right) | d\hat{s}. \label{IntegralEquation1}
\end{equation}
 Using Eqs.~\eqref{eq:syaratvardin} and \eqref{estJY}, we get
 %
\begin{eqnarray}
    | \bm{u}(t) | &\leq& | \bm{u}(t_0) | + \frac{1}{2} (d-1) \left[ \left| \frac{\Big[1 - \frac{2\gamma d}{(d-2)}\Big](1+w_\text{m})}{\Big[1-\frac{2\gamma(d-1)}{(d-2)}(1+w_\text{m})\Big]}-2\right|  \frac{\left| 1-\frac{2\gamma d}{(d-2)} \right|^{\frac{1}{2}}}{\left|1-\frac{4\gamma(d-1)}{(d-2)}\right|^{\frac{1}{2}}} \right. \nonumber \\
    && + |\lambda_V| \sqrt{\frac{d-2}{d-1}} \left(\left| 1-\frac{2\gamma d}{(d-2)} \right| + \frac{\left| 1-\frac{2\gamma d}{(d-2)} \right|}{\left|1-\frac{4\gamma(d-1)}{(d-2)}\right|^{\frac{1}{2}}} \right)  \nonumber \\
    &&  + \left| \frac{(1-w_\text{m})}{\Big[1-\frac{2\gamma(d-1)}{(d-2)}(1+w_\text{m})\Big]} \right|  \left(\frac{\left| 1-\frac{2\gamma d}{(d-2)} \right|^{\frac{3}{2}}}{\left|1-\frac{4\gamma(d-1)}{(d-2)}\right|} + \frac{\left| 1-\frac{2\gamma d}{(d-2)} \right|^{\frac{3}{2}}}{\left|1-\frac{4\gamma(d-1)}{(d-2)}\right|^{\frac{3}{2}}}\right)\nonumber \\
    &&  + \left| \frac{(1+w_\text{m})}{\Big[1-\frac{2\gamma(d-1)}{(d-2)}(1+w_\text{m})\Big]} \right| \left(\frac{\left| 1-\frac{2\gamma d}{(d-2)} \right|^{\frac{3}{2}}}{\left|1-\frac{4\gamma(d-1)}{(d-2)}\right|^{\frac{1}{2}}} + \left| 1-\frac{2\gamma d}{(d-2)} \right|^{\frac{3}{2}}\right)  \nonumber\\
    &&  \left. +  \left| \frac{\Big[1-\frac{2\gamma d}{(d-2)}\Big] (1+w_\text{m})}{\Big[1-\frac{2\gamma(d-1)}{(d-2)}(1+w_\text{m})\Big]}\right| \left| 1-\frac{2\gamma d}{(d-2)} \right|^{\frac{1}{2}}\right] \ln{\left( \frac{a(t)}{a(t_0)} \right)} ~ .
    \label{eq:solIntegralEquation}
\end{eqnarray}
%
The second part is to consider a case where
\begin{equation}\label{eq:gammacond1}
\gamma > \frac{d-2}{4(d-1)} ~ , \quad \gamma(1+w_\text{m}) \le \frac{d-2}{2(d-1)} ~ ,
\end{equation}
in which the pre-coefficient of the second term in the left-hand side of \eqref{constraineq} is negative. In this case, we may have
\begin{equation}
    \left.
    \begin{array}{rcl}
       \left|1-\frac{4\gamma(d-1)}{(d-2)}\right|^{\frac{1}{2}}   x_\phi & = & \left| 1-\frac{2\gamma d}{(d-2)} \right|^{\frac{1}{2}} \cos{\alpha} \sinh{\beta} \\
        x_V & = & \left| 1-\frac{2\gamma d}{(d-2)} \right|^{\frac{1}{2}} \cos{\alpha} \cosh{\beta} \\
         \left| 1-\frac{2\gamma(d-1)}{(d-2)}(1+w_m)\right|^{\frac{1}{2}}  x_m & = & \left| 1-\frac{2\gamma d}{(d-2)} \right|^{\frac{1}{2}} \sin{\alpha} 
    \end{array}
    \right. \label{eq:syaratvardin1}
\end{equation}
where $\alpha \equiv \alpha(s)$ and $\beta \equiv \beta(s)$. In the case at hand, it is easy to show that Lemma \ref{opJY} and Lemma \ref{uniqueness} still hold, but the estimate of \eqref{IntegralEquation1} has to be modified. Thus, the estimate \eqref{IntegralEquation1} has the form
\begin{eqnarray}
 \!\!\!\! \!\!\!\! \!\!\!\!   | \bm{u}(t) | &\leq& | \bm{u}(t_0) | + \frac{1}{2} (d-1) \left[ \left| \frac{\Big[1 - \frac{2\gamma d}{(d-2)}\Big](1+w_\text{m})}{\Big[1-\frac{2\gamma(d-1)}{(d-2)}(1+w_\text{m})\Big]}-2\right| \frac{\left| 1-\frac{2\gamma d}{(d-2)} \right|^{\frac{1}{2}}}{\left|1-\frac{4\gamma(d-1)}{(d-2)}\right|^{\frac{1}{2}}} \int_{s_0}^s |\sinh{\beta}| d\hat{s} \right. \nonumber \\
    && + |\lambda_V| \sqrt{\frac{d-2}{d-1}} \left|1-\frac{4\gamma(d-1)}{(d-2)}\right| \int_{s_0}^s \cosh^2{\beta} ~ d\hat{s} \nonumber \\
    && +  \left| \frac{(1-w_\text{m})}{\Big[1-\frac{2\gamma(d-1)}{(d-2)}(1+w_\text{m})\Big]}\right| \frac{\left| 1-\frac{2\gamma d}{(d-2)} \right|^{\frac{3}{2}}}{\left|1-\frac{4\gamma(d-1)}{(d-2)}\right|^{\frac{3}{2}}} \int_{s_0}^s |\sinh{\beta}|^3 d\hat{s}  \nonumber \\
    &&   + \left| \frac{(1+w_\text{m})}{\Big[1-\frac{2\gamma(d-1)}{(d-2)}(1+w_\text{m})\Big]}\right| \frac{\left| 1-\frac{2\gamma d}{(d-2)} \right|^{\frac{3}{2}}}{\left|1-\frac{4\gamma(d-1)}{(d-2)}\right|^{\frac{1}{2}}} \int_{s_0}^s |\sinh{\beta}|\cosh^2{\beta}  d\hat{s} \nonumber \\
    &&  +  \left| \frac{\Big[1-\frac{2\gamma d}{(d-2)}\Big] (1+w_\text{m})}{\Big[1-\frac{2\gamma(d-1)}{(d-2)}(1+w_\text{m})\Big]}\right| \left|1-\frac{4\gamma(d-1)}{(d-2)}\right|^{\frac{1}{2}} \int_{s_0}^s |\cosh{\beta}| ~ d\hat{s} \nonumber \\
    && +  |\lambda_V | \sqrt{\frac{d-2}{d-1}}  \frac{\left| 1-\frac{2\gamma d}{(d-2)} \right|}{\left|1-\frac{4\gamma(d-1)}{(d-2)}\right|^{\frac{1}{2}}} \int_{s_0}^s |\sinh{\beta} \cosh{\beta}| d\hat{s}  \nonumber \\
    &&   +  \left| \frac{(1-w_\text{m})}{\Big[1-\frac{2\gamma(d-1)}{(d-2)}(1+w_\text{m})\Big]} \right| \frac{\left| 1-\frac{2\gamma d}{(d-2)} \right|^{\frac{3}{2}}}{\left|1-\frac{4\gamma(d-1)}{(d-2)}\right|} \int_{s_0}^s |\sinh^2{\beta} \cosh{\beta}| d\hat{s}  \nonumber\\
    && \left. + \left| \frac{(1+w_\text{m})}{\Big[1-\frac{2\gamma(d-1)}{(d-2)}(1+w_\text{m})\Big]} \right| \left|1-\frac{4\gamma(d-1)}{(d-2)}\right|^{\frac{3}{2}} \int_{s_0}^s |\cosh^3{\beta}| ~ d\hat{s}  \right]. \label{eq:solIntegralEquation1}
\end{eqnarray}
 For other cases, we employ a similar methodology as above . 

Finally, we discuss the $\Lambda \neq 0$ case in which we introduce the extended dynamical variables
\begin{equation}
    \bm{u}_\Lambda  = \begin{pmatrix} x_\phi \\ x_V \\ x_\Lambda \end{pmatrix}, \label{dynvarlamb}
\end{equation}
 on an interval $I \equiv [s, s+ \epsilon]$ such that we have the equation
\begin{equation}
    \frac{d\bm{u}_\Lambda}{ds} = \mathcal{J}_\Lambda(\bm{u}_\Lambda) ~ , \label{fungsiJlamb}
\end{equation}
coming from \eqref{eq:xphia}, \eqref{eq:xva}, and \eqref{eq:xLambdaa} where
%
\begin{eqnarray}\label{JYlamb}
    \mathcal{J}_\Lambda(\bm{u}_\Lambda) \equiv \frac{1}{2} (d-1) \left( \begin{array}{c}
        \left(\frac{\Big[1 - \frac{2\gamma d}{(d-2)}\Big](1+w_\text{m})}{\Big[1-\frac{2\gamma(d-1)}{(d-2)}(1+w_\text{m})\Big]}-2\right)x_{\phi} + \lambda_V \sqrt{\frac{d-2}{d-1}} x_V^2 \nonumber \\
   +\frac{(1-w_\text{m})}{\Big[1-\frac{2\gamma(d-1)}{(d-2)}(1+w_\text{m})\Big]} x_{\phi}^3 - \frac{(1+w_\text{m})}{\Big[1-\frac{2\gamma(d-1)}{(d-2)}(1+w_\text{m})\Big]} x_{\phi} x_{V\Lambda(\pm)}^2 \\ \\
        \frac{\Big[1-\frac{2\gamma d}{(d-2)}\Big] (1+w_\text{m})}{\Big[1-\frac{2\gamma(d-1)}{(d-2)}(1+w_\text{m})\Big]} x_V  +\frac{(1 - w_\text{m})}{\Big[1 - \frac{2\gamma(d-1)}{(d-2)}(1 + w_\text{m})\Big]} x_V x_{\phi}^2 \nonumber\\
    -\lambda_V \sqrt{\frac{d-2}{d-1}} x_{\phi} x_V - \frac{(1 + w_\text{m})}{\Big[1 -\frac{2\gamma(d-1)}{(d-2)}(1 + w_\text{m})\Big]} x_V x_{V\Lambda(\pm)}^2  \\ \\
        \frac{\left[1 - \frac{2\gamma d}{(d-2)}\right](1 + w_\text{m})}{\Big[1 - \frac{2\gamma(d-1)}{(d-2)}(1 + w_\text{m})\Big]} x_{\Lambda(\pm)} + \frac{(1 - w_\text{m})}{\Big[1 - \frac{2\gamma(d-1)}{(d-2)}(1+w_m)\Big]} x_{\Lambda(\pm)} x_{\phi}^2 \nonumber \\
    -\frac{(1+w_\text{m})}{\Big[1 - \frac{2\gamma(d-1)}{(d-2)}(1 + w_\text{m})\Big]}x_{\Lambda(\pm)}x_{V\Lambda(\pm)}^2 
\end{array} \right).
\end{eqnarray}
%
Then, we employ a similar procedure as the preceding $\Lambda = 0$ case to show that in this case a global solution of \eqref{eq:xphia}, \eqref{eq:xva}, and \eqref{eq:xLambdaa} with constraint \eqref{constraineq1} does exist.

Thus, we could state
\begin{theorem} \label{thmlocglob}
    There exists a global classical solution of spatially flat FLRW spacetimes in higher dimensional Klein-Gordon-Rastall theory with a scalar potential \eqref{eq:Vpotexp} and a real cosmological constant.
\end{theorem}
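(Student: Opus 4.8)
The plan is to assemble the theorem from the two lemmas already established together with an \emph{a priori} bound that rules out escape of the trajectory in finite ``time'' $s = \ln a$. Lemma \ref{opJY} and Lemma \ref{uniqueness} furnish, for any admissible datum $\bm{u}(s_0) = \bm{u}_0$, a unique local solution of \eqref{fungsiJ} on a short interval $I = [s_0, s_0 + \varepsilon]$, realized as the unique fixed point of the Picard operator \eqref{OpKdefinition}. First I would promote this to a \emph{maximal} solution on an open interval $(s_-, s_+)$ by the standard continuation argument: restart the contraction scheme from the endpoint data $\bm{u}(s_0 + \varepsilon)$ and glue, with uniqueness guaranteeing that overlapping pieces agree. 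The whole force of the word ``global'' then reduces to proving $s_+ = +\infty$ and $s_- = -\infty$, i.e. that $|\bm{u}(s)|$ cannot blow up across any finite $s$.

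For the $\Lambda = 0$ theory the decisive input is the constraint \eqref{constraineq}. In the regime \eqref{eq:gammacond}, where every coefficient in \eqref{constraineq} is nonnegative, the constraint confines $\bigl(|1 - 4\gamma(d-1)/(d-2)|^{1/2} x_\phi,\, x_V\bigr)$ to the compact sphere of \eqref{eq:syaratvardin}. The solution therefore lives on a compact invariant set, so $|\bm{u}(s)|$ is bounded uniformly in $s$ and no finite-$s$ blow-up can occur; the explicit estimate \eqref{eq:solIntegralEquation}, which grows only linearly in $\ln(a/a_0) = s - s_0$, makes this quantitative and already forces $s_\pm = \pm\infty$. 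Global existence and uniqueness then follow at once in this regime.

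The genuinely delicate case is the regime \eqref{eq:gammacond1}, where the coefficient of $x_\phi^2$ in \eqref{constraineq} turns negative and the constraint surface ceases to be compact. Here I would pass to the hyperbolic parametrization \eqref{eq:syaratvardin1}, trading $(x_\phi, x_V, x_m)$ for angles $\alpha(s), \beta(s)$, and control the solution through the integral inequality \eqref{eq:solIntegralEquation1}. Because the vector field \eqref{JY} is cubic while the phase space is now noncompact, boundedness is no longer automatic: the right-hand side of \eqref{eq:solIntegralEquation1} involves $\int \cosh^3\beta\, d\hat{s}$ and similar terms, and a naive estimate yields only local-in-$s$ control. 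I expect this to be the main obstacle. The way through is a Gr\"onwall-type continuation argument showing that on any finite $s$-interval the hyperbolic angle $\beta$ cannot diverge, so that $|\bm{u}(s)|$ remains finite there and the maximal solution again extends to all of $\lR$. The remaining sign choices for $\gamma$ and $w_\text{m}$ are dispatched by the same template.

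Finally, for the $\Lambda \neq 0$ theory I would repeat the construction verbatim with the extended variable $\bm{u}_\Lambda$ of \eqref{dynvarlamb}, the vector field $\mathcal{J}_\Lambda$ of \eqref{JYlamb}, and the constraint \eqref{constraineq1}. The components of $\mathcal{J}_\Lambda$ share exactly the polynomial structure of $\mathcal{J}$, so Lemma \ref{opJY} and Lemma \ref{uniqueness} transfer with only bookkeeping changes, while \eqref{constraineq1} supplies the analogous compact (for $\tilde\Lambda > 0$) or hyperbolic (for $\tilde\Lambda < 0$) a priori bound. Collecting the local existence, uniqueness, maximal continuation, and the finite-$s$ non-blow-up estimates then delivers the claimed global classical solution, completing the proof.
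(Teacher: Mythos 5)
Your proposal reproduces the paper's own argument essentially step for step: local existence and uniqueness from Lemma \ref{opJY} and Lemma \ref{uniqueness} via the Picard operator \eqref{OpKdefinition}, continuation to a maximal solution by gluing, a priori control from the constraint \eqref{constraineq} — the compact sphere \eqref{eq:syaratvardin} under \eqref{eq:gammacond} with the linear-in-$s$ estimate \eqref{eq:solIntegralEquation}, and the hyperbolic parametrization \eqref{eq:syaratvardin1} with estimate \eqref{eq:solIntegralEquation1} under \eqref{eq:gammacond1} — followed by the same verbatim transfer to the $\Lambda \neq 0$ system \eqref{fungsiJlamb} with constraint \eqref{constraineq1}. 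The only place you diverge is in candidly flagging that in the noncompact regime the hyperbolic estimate alone does not close and sketching a Gr\"onwall-type argument for the angle $\beta$; the paper leaves exactly this step implicit (``for other cases, we do similar way as above''), so your account is, if anything, slightly more careful at the same point of incompleteness.
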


\section{Cosmological Model}
\label{sec:cosmologicalmodel}
In Sec. \eqref{sec:dynamicalLambda} and \eqref{sec:Lambdaneq}, we have described the cosmological behaviors of each critical point in our model. In this section, We focus on discussing the cosmological model based on them, specifically on how our model might provide pictures of the early-time universe corresponding to its unstable nodes whilst the late-time universe corresponds to its stable nodes. The former is associated with the inflationary phase which we shall show that even within the KGR framework we still obtain power-law inflation due to the exponential potential of the scalar field. The latter one, the late-time universe model, should be able to explain the accelerated expansion of the universe which is driven by the dark energy-like along with baryonic and dark matter as the constituents of the universe. For the $\Lambda = 0$ case, these interesting features are included in CP4 since it gives accelerated expansion both at early and late-time eras. At this point, CP4, the cosmological model of $\Lambda \neq 0$ case cannot describe the late-time universe because there are only unstable and saddle nodes. The stable nodes, in this case, are only obtained from CP5 even if we have to argue that the non-hyperbolic regions will become late-time attractors by further analysis, as CP5 is reached as $\lambda_V = 0$, this point does not describe the universe filled by the scalar field with exponential potential such that it is inconsistent to power-law inflation in the early time.\\ \\
\subsection{The Early Time Universe: Power-Law Inflation in KGR Cosmology}
To convert Friedmann equation \eqref{eq:friedmann} into autonomous equations, we have introduced the exponential potential of $V(\phi)$ as the specific form in our analysis. Since we use the exponential form of the potential, we may have an inflationary era in the early epoch of the universe that can be described by the power-law inflation if the scalar $\phi$ plays a role as inflaton field which dominates the energy density of the early universe. In this case, the scale factor is given by $a(t) \propto t^{\ell}$ \cite{lucchin1985power}, where the parameter $\ell > 1$. Thus, CP4 with saddle nodes and $\Omega_\text{KGR} \approx 1 $, are good candidates to describe such an era. On the side, some models describe an inflationary era in which not only did the inflaton predominate in the early universe, $\Omega_\text{KGR} < 1 $, but was also perfect fluid with a general value of $w_\text{m}$ \cite{barrow1993scalar, banerjee1998power}. In CP4, the latter possibility gives unstable solutions and an accelerating universe which makes this condition can also be inferred as a power-law inflation model in Rastall cosmology as long as the scale factor evolves according to $a(t) \propto t^{\ell}$ with $\ell > 1$. \\ \\
Now we discuss the inflationary model obtained from CP4 for both $\Lambda = 0$ and $\Lambda \neq 0$ cases. We have to emphasize that Rastall theory, $\gamma \neq 0$, is only consistent with the inflationary phase driven by inflaton and perfect fluid. Let us begin with the Friedmann equation \eqref{eq:friedmann}, by taking $N = 1$, under the condition that not only does the scalar $\phi$ energy density dominates at early epoch but also $\rho_\text{m}$ exists in some fraction as a perfect fluid,

\begin{align}
H ^2 =& \frac{2\kappa}{(d-1) (d - 2 - 2\gamma d)} \Bigg(\bigg[1 - \frac{4 \gamma(d-1)}{(d-2)} \bigg] \frac{\dot{\phi}^2}{2} + V(\phi) + \left[1 - \frac{2\gamma (d-1)}{(d-2)} (1 + w_\text{m})\right] \rho_\text{m} \Bigg) \, . \label{eq:friedmanninf}
\end{align}
In this situation, the perfect fluid may be a form of dust-like $(w_\text{m} = 0)$, radiation-like $\left(w_\text{m} = \frac{1}{(d-1)}\right)$, vacuum-like $(w_\text{m} = -1)$, or stiff-like ($w_\text{m} = 1$). Another possibility is that it is a fluid manifestation of the scalar field other than the inflaton that exists during inflation, as long as they are equivalent \cite{arroja2010note, faraoni2012correspondence,diez2013note}. Equation of motion of scalar field $\phi$ gives,
\begin{equation}
\label{eq:sf}
    \ddot{\phi} + (d-1)H\dot{\phi} + V,_{\phi} = 0\, . 
\end{equation}
We also have
\begin{eqnarray}
\label{eq:C}
    \dot{H} = -\frac{\kappa}{(d-2)} \Big[\rho_\text{m}+p_\text{m} + \dot{\phi}^2\Big] \, ,
\end{eqnarray}
and
\begin{align}
  \ddot{H} = - \frac{\kappa}{(d-2)} \Bigg[\dot{\rho}_\text{m} + \dot{p}_\text{m} + 2\dot{\phi}\ddot{\phi}\Bigg] 
 \, . \label{D}
\end{align}
In addition, we have the solution of Eq. \eqref{eq:sf} as follows,

\begin{eqnarray}
    \phi(t) = \frac{2}{\sqrt{\kappa} \lambda_V} \ln{(\kappa^{-\frac{1}{2}}\, t)}\, ,
\end{eqnarray}
such that we have,
\begin{eqnarray}
    V_0 = \frac{2}{\kappa^2 \lambda_V^2} [(d-1)\ell - 1]\, ,
\end{eqnarray}
From Rastall modification, we have
\begin{eqnarray}
    \dot{\rho}_\text{m} + (d-1)H(\rho_\text{m} + p_\text{m}) = -2 \lambda (d-1) (\ddot{H}+d H\dot{H})\, .
\end{eqnarray}
We take the form of
\begin{eqnarray}
    \rho_\text{m} (t) = \frac{2 (d-1) \ell}{\kappa \lambda_V^2 \left[ 1 - \frac{2\gamma(d-1)}{(d-2)}(1 + w_\text{m})\right] t^2} \left[\frac{1}{1 - \frac{2\gamma d}{d-2}} - 1\right]\, ,
\end{eqnarray}
for non-vacuum fluid $(w_\text{m} \neq -1)$ and
\begin{eqnarray}
    \rho_\text{m} (t) = \frac{4 \gamma (d-1) (2 - d\, \ell)}{(d - 2)\kappa \lambda_V^2 t^2} \, ,
\end{eqnarray}
for vacuum fluid $(w_\text{m} = -1)$ in our calculation and use the form of $H(t) = \ell/t$ to obtain an exact solution of power-law inflation. The form of $\phi(t)$, $H(t)$ and  $\rho_\text{m} (t)$ are consistent to solve Eq. \eqref{eq:friedmanninf} and Eq. \eqref{eq:C} with the consequence that
\begin{eqnarray} \label{eq:ell}
    \ell = \frac{2 (d-2)}{(d - 2 - 2\gamma d)^2 \lambda_V^2} \Bigg[ 1 + \sqrt{ 1 - \frac{4\gamma (d - 2 - 2\gamma d)^3}{(d - 2)^3} \lambda_V^2} \Bigg] > 1 \, ,
\end{eqnarray}
for $w_\text{m} \neq -1$ case and 
\begin{eqnarray} \label{eq:ellvac}
    \ell = \frac{4}{(d - 2) \lambda_V^2} > 1 \, ,
\end{eqnarray}
 for $w_\text{m} = -1$ case. It can be seen that the solution in Eq. \eqref{eq:ell} only exists for $0 < \lambda_V \leq \sqrt{\frac{(d-2)^3}{4\gamma (d-2-2\gamma d)^3}}$. It is possible to obtain $\ell \gg 1$ with the corresponding parameters $\lambda_V$ and $\gamma$  satisfying the slow-roll condition since $\epsilon = \frac{|\dot{H}|}{H^2} = \frac{1}{\ell} \ll 1$. For instance, it can be seen in Fig. \ref{fig:pli}. It is worth mentioning that power-law inflation discussed in the GR framework has been ruled out by PLANCK 2013 data \cite{ade2014planck} such that, in this sense, it is viewed as a failed model of inflation. Fortunately, there is an attempt to solve this problem in terms of cuscuton-gravity \cite{ito2019dressed} in which it can satisfy PLANCK constraint such that it may remain as an accepted inflation model. We roughly suspect that power-law inflation may be recovered in theories beyond GR, one of which is in the KGR cosmology, although further investigation is needed to ensure this statement. \\ \\
\begin{figure}[h]
	\centering
	\includegraphics[width=0.5\textwidth]{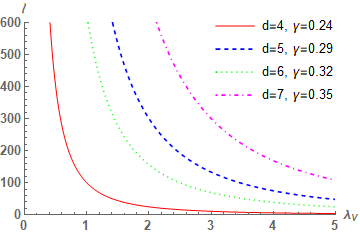}
	\caption{The values of $\ell$ as a function of $\lambda_V$.}
	\label{fig:pli}    
\end{figure}
\subsection{The Late-Time Universe: Dark Energy Paradigm in KGR Cosmology}
At the same fixed point, CP4, it is possible to obtain stable nodes and an accelerating universe when the scalar field and non-relativistic matter $(w_\text{m} = 0)$ exist in KGR cosmology for $\Lambda = 0$ case. In this sense, the accelerating universe can be explained by a nontrivial interplay between a scalar field with an exponential potential and the non-conservative EMT of the non-relativistic matter field (baryonic and dark matter) in curved spacetime which plays a role as dark energy since $w_\text{KGR} < -\frac{1}{(d-1)}$. This result agrees with the cosmological observation that the subdominant components of our current universe are baryonic and dark matter, $\Omega_\text{m} \approx 0.3$ (See Fig. \ref{fig:cp4t}). In addition, the phase plane illustrations for the $\Lambda = 0$ case are provided in Fig. \ref{fig:phase}. We may argue that our model can alleviate the cosmic coincidence problem since our observed universe is a late-time attractor. However, these features cannot be found for the $\Lambda\neq 0$ case due to the absence of stable nodes in CP4. The late-time behavior for this case is given by the interplay between massless scalar field and cosmological constant acting as dark energy taking responsibility for the accelerated expansion of the universe.\\
\subsection{Cosmological Sequence of KGR Theory}
The KGR theory without a cosmological constant might provide a cosmological sequence which starts from inflation and leads to the dark energy era. Utilizing information that the subdominant components of the current universe are represented by cold dark matter and baryons, then $w_\text{m} = 0$ \cite{kopp2018dark}. Thus, in the late-time sector we can set $w_\text{m} = 0$ allowing us to obtain a scalar field-dust-like matter (baryonic and dark matter) universe as a late-time attractor. It may eliminate the cosmic coincidence problem. However, in the early-time inflation sector, observational data regarding $w_\text{m}$ are not available. Therefore, in this discussion, we choose the value of $w_\text{m}$ based on inflation criteria from a dynamical system perspective, which involves unstable critical points and accelerating universe. From Fig. \ref{fig:cp4t}, both unstable and stable critical points that allow the universe to expand at an accelerated rate are obtained from CP4 with particular values of $\gamma$ depending on dimension $d$.

In this discussion, without loss of generality, we will only mention the possible cosmic history for specific values of cosmological parameters chosen in dimensions $d=4$, $d=5$, and $d=6$, along with their implications. In the case of selecting a parameter set $(d, \gamma, \lambda_V) = (4, 0.23, 2.3)$, we may have a sequence of cosmological eras that includes both power-law inflation and late-time acceleration with $w_\text{m} = 1$ and  $w_\text{m} = 0$, respectively. Similar features can also be obtained for higher dimensions, for instance, in the case of $(d, \gamma, \lambda_V) = (5, 0.29, 3.4)$, inflation and dark energy eras are achieved if $w_\text{m} = 0.45$ and $w_\text{m} = 0$, respectively. As a final example, for $(d, \gamma, \lambda_V) = (6, 0.32, 3.6)$, $w_\text{m} = 0.28$ and $w_\text{m} = 0$ correspond to saddle power-law inflation and late-time acceleration, respectively. Refer to Fig. \ref{fig:phase} for phase plane diagrams. Note that we can vary the value of the parameter $w_\text{m}$ in the early-time era while still permitting inflation and dark energy eras to occur. 

Furthermore, all the speculative scenarios mentioned above allow the universe to pass through radiation-dominated and matter-dominated eras. The radiation-dominated era, with $w_\text{m} = \frac{1}{(d-1)}$ and $w_\text{eff} = \frac{1}{(d-1)}$, can be obtained without any issues in this model. However, for the matter-dominated era, some attention is needed, as for $w_\text{m} = 0$, the universe would experience accelerated expansion due to the coupling between non-relativistic matter and geometry, effectively causing cold dark matter to behave like dark energy. Nevertheless, to ensure that the universe has passed through a matter-dominated era, we need to ensure that the solution at CP1 allows us to obtain $w_\text{eff}=0$ (see Fig. \ref{fig:cp1}). This condition can be achieved for $0 < w_\text{m} < \frac{1}{(d-1)}$, which means that the universe is dominated by warm dark matter \cite{wei2013cosmological}, which effectively behaves like cold dark matter in the Rastall framework. These results are precisely as predicted in \cite{chagoya2023cosmological}, suggesting that cold dark matter in the Rastall framework might behave akin to warm/hot dark matter in the GR framework. It is worth stressing that the cosmological sequence in the KGR theory may be summarized as: inflation (CP4-unstable-accelerated) $\rightarrow$ radiation-dominated (CP1-saddle-decelerated) $\rightarrow$ matter-dominated (CP1-saddle-decelerated) $\rightarrow$ dark energy (CP4-stable-accelerated). To achieve a more realistic cosmic history, which involves merging inflation and dark energy scenarios through radiation and matter-dominated eras which are consistent with observational outcomes, this model still requires verification from observational data on the parameters $w_\text{m}$ and $\gamma$, especially during the inflation and matter-dominated epochs.

Lastly, we can also track the cosmic history of KGR cosmology for the case of $\Lambda \neq 0$, which is compatible only if $\lambda_V = 0$. This is equivalent to the context of a massless scalar field and a cosmological constant. The radiation and matter-dominated eras are part of cosmological sequence of this model, with a late-time attractor being an accelerating universe driven by dark energy. Thus, the chronology of the universe for this case may be summarized as: radiation-dominated (CP1-saddle-decelerated) $\rightarrow$ matter-dominated (CP1-saddle-decelerated) $\rightarrow$ dark energy-dominated (CP5-stable-accelerated).
\begin{figure}[!ht]
	\centering
		\begin{subfigure}[t]{0.49\textwidth}
		\includegraphics[width=1\textwidth]{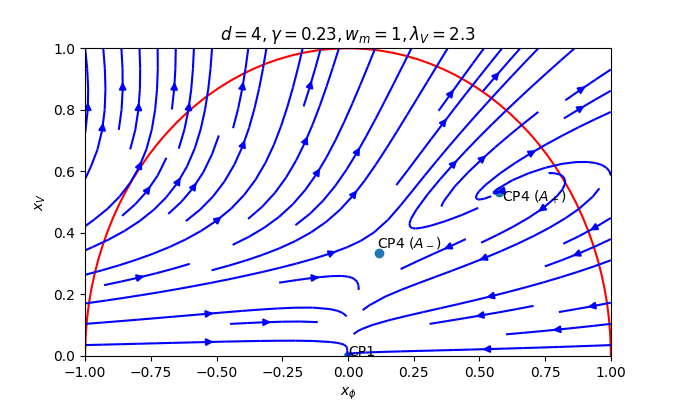}
		\caption{$d=4$, $\gamma = 0.23$, $w_\text{m} = 1$, $\lambda_V=2.3$}
	\end{subfigure}   
	\begin{subfigure}[t]{0.49\textwidth}
		\includegraphics[width=1\textwidth]{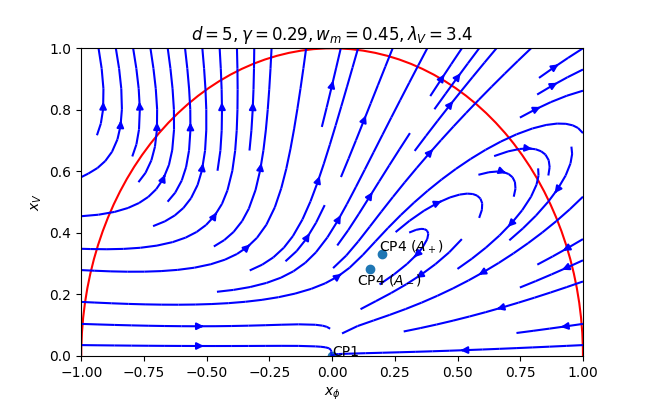}
		\caption{$d=5$, $\gamma = 0.29$, $w_\text{m} = 0.45$, $\lambda_V=3.4$}
	\end{subfigure} 
	\begin{subfigure}[t]{0.49\textwidth}
		\includegraphics[width=1\textwidth]{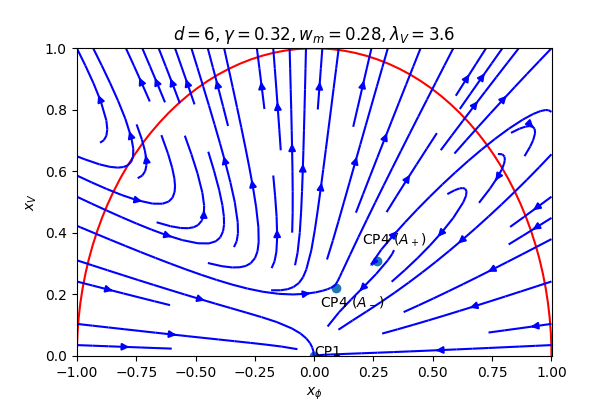}
		\caption{$d=6$, $\gamma = 0.32$, $w_\text{m} = 0.28$, $\lambda_V=4.6$}
	\end{subfigure}
	\begin{subfigure}[t]{0.49\textwidth}
		\includegraphics[width=1\textwidth]{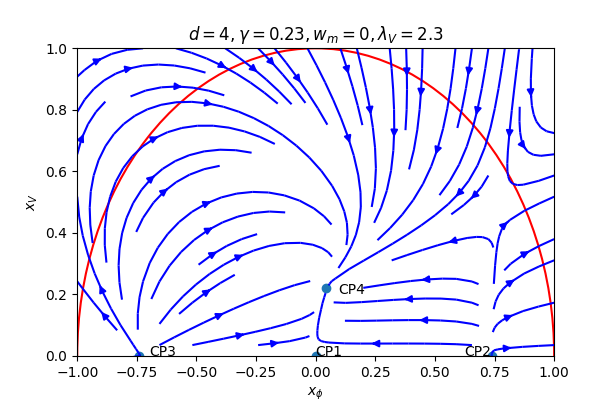}
		\caption{$d=4$, $\gamma = 0.23$, $w_\text{m} = 0$, $\lambda_V=2.3$}
	\end{subfigure}   
	\begin{subfigure}[t]{0.49\textwidth}
		\includegraphics[width=1\textwidth]{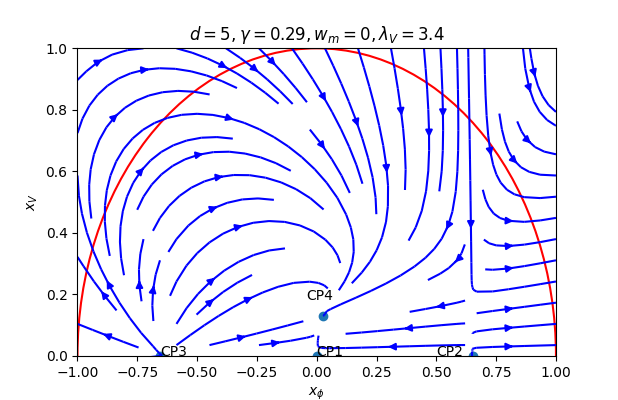}
		\caption{$d=5$, $\gamma = 0.29$, $w_\text{m} = 0$, $\lambda_V=3.4$}
	\end{subfigure} 
	\begin{subfigure}[t]{0.49\textwidth}
		\includegraphics[width=1\textwidth]{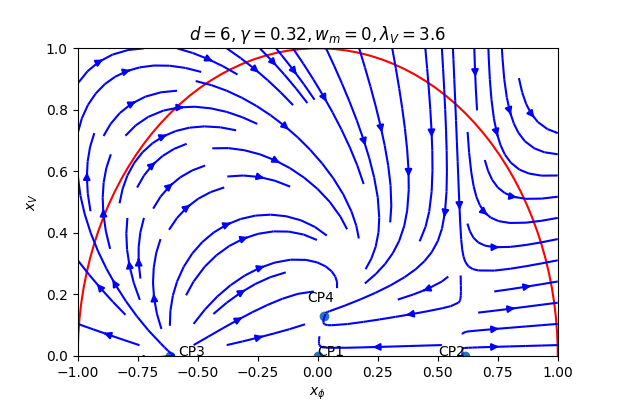}
		\caption{$d=6$, $\gamma = 0.32$, $w_\text{m} = 0$, $\lambda_V=3.6$}
	\end{subfigure}
	\caption{The phase plane of the cosmological model KGR for early-time (a) $d=4, \gamma=0.23, w_\text{m}=1, \lambda_V=2.3$, (b) $d=5, \gamma=0.29, w_\text{m}=0.45, \lambda_V=3.4$, and (c) $d=6, \gamma=0.32, w_\text{m}=0.28, \lambda_V=3.6$, and late-time (d) $d=4, \gamma=0.23, w_\text{m}=0, \lambda_V=2.3$, (e) $d=5, \gamma=0.29, w_\text{m}=0, \lambda_V=3.4$, (f) $d=6, \gamma=0.32, w_\text{m}=0, \lambda_V=3.6$ for the case of $\Lambda = 0$. We have identified unphysical stable critical points (CP1) in Subfig. (a), (b), and (c) due to the condition $\Omega_\text{m} < 0$ (see Fig.\ref{fig:cp1}). The universe with the composition of the scalar field and dust-like matter (baryonic and dark matter) constitutes a late-time attractor. The red line in the pictures represent a semicircle with a radius of one.} 
	\label{fig:phase}    
\end{figure}  
\section{Conclusion}
\label{sec:conclusion}
We have investigated the cosmological consequences of spatially flat FLRW spacetime in higher dimensional KGR theory using dynamical system analysis. A scalar field with exponential potential is chosen to transform Friedmann equations into autonomous equations. We find that there exist five critical points, namely CP$_{1-5}$, which can be related to the cosmological era based on its stability, equation of state, density parameter, and decelerated parameter.

Then, we have established the local-global existence and
the uniqueness of the evolution equations \eqref{eq:xphi} and \eqref{eq:xv} with constraint \eqref{constraineq} for $\Lambda = 0$ case and of the equations \eqref{eq:xphia}, \eqref{eq:xva} and \eqref{eq:xLambdaa} with constraint \eqref{constraineq1} for $\Lambda \neq 0$ using Picard’s iteration and the contraction mapping properties. In each case, we consider all ranges of parameters $w_\text{m}$ and $\gamma$ correspond to the signature of each coefficient in Eq.\eqref{constraineq} and  \eqref{constraineq1}. Note that our results apply to both cases, namely $\Lambda = 0$ and $\Lambda \neq 0$.

We have also particularly discussed several possible cosmological models of higher dimensional KGR theory for the $\Lambda = 0$ case that are suitable to explain both inflation and the accelerated universe in the late-time. The foundation of our investigation of inflation within this model relies on the existence of unstable solutions from CP4.  The reason for selecting the term "unstable critical point" to characterize this area lies in its potential to reveal an intrinsic instability mechanism within the KGR theory. This mechanism may serve as a graceful way to exit the inflation process (See Ref. \cite{odintsov2015singular,nojiri2015singular,oikonomou2018autonomous} for further discussion). Another interesting result of KGR inflation is that when compared to the original PLI model \cite{halliwell1987scalar}, apart from its instability, we can also obtain inflationary solutions even with steep potential. The reason for exploring such types of potentials is based on the motivation that certain physics theories make such predictions. Previously, there have been several studies seeking inflationary solutions for steep potentials, such as multifield inflation \cite{liddle1998assisted} or by introducing an interaction coupling between the inflaton and radiation \cite{wands1993exponential,yokoyama1988dynamics, billyard2000interactions}. 

In the inflationary sector, we have derived the exact solution of power-law inflation without assuming a slow-roll mechanism in which it represents a scalar field-perfect fluid universe given by CP4. It is necessary to investigate further whether the power-law inflation in the KGR model can satisfy the PLANCK constraints. In the late-time sector, CP4 of the KGR model also offers an interesting feature such that cosmic acceleration occurs due to non-trivial interplay between a scalar field with an exponential potential and non-conservation of EMT of baryonic and dark matter which play a role as dark energy since $w_\text{KGR} < -\frac{1}{(d-1)}$. These results indicate that the KGR model with a scalar field having an exponential potential, $\lambda_V \neq 0$, without the presence of a cosmological constant, is a viable cosmological model in both the early-time and late-time eras. Under these circumstances, the cosmological sequence within the KGR theory may be outlined as: inflation results in a period dominated by radiation, which subsequently shifts to an era dominated by matter, eventually giving rise to the presence of dark energy.

On the other hand, the KGR model for $\Lambda \neq 0$ remains viable to produce power-law inflation during the early-time era, whereas, in the late-time era, it is only compatible if $\lambda_V = 0$, which represents the universe dominated by the constant potential term of the scalar field and the cosmological constant $\Lambda$. This implies that the KGR model involving a scalar field with an exponential potential, $\lambda_V \neq 0$, and the existence of a cosmological constant is incompatible with a physical cosmological model in the late-time era. The timeline of the universe in this case may be sketched as: radiation-dominated leading to a matter-dominated phase, ultimately transitioning into a dark energy-dominated universe.
\section*{Acknowledgments}
TAW acknowledges LPDP for financial support. The work of BEG and AS is supported by Hibah Riset ITB. BEG also acknowledges  Hibah Riset Fundamental Kemendikbudristekdikti for financial support. HA is partially supported by the World Class Research (WCR) Grant from Kemendikbudristek-IPB for fiscal year 2022.

\appendix

\section{Appendix A: Linear Stability for $\Lambda = 0$ Case}
\label{sec:appA}
 In this Appendix, we analyze the linear perturbation of the equations \eqref{eq:xphi}-\eqref{eq:xv} around the critical points $(x_{\phi,c}, x_{V,c})$ by expanding the autonomous variables as follow
 \begin{eqnarray}
     x_{\phi}=x_{\phi,c} + u_{\phi}\, ,
 \end{eqnarray}
 \begin{eqnarray}
     x_{V}=x_{V,c} + u_{V}\, .
 \end{eqnarray}
The form of the equation of motions for the first order is as follows
\begin{align}
    \frac{2}{(d-1)N} u_{\phi}^{'} = &\left[\frac{(1 + w_\text{m})\left(1-\frac{2\gamma d}{(d-2)} - x_{V,c}^2 \right) + 3(1-w_\text{m}) x_{\phi,c}^2}{\left[1-\frac{2\gamma(d-1)}{(d-2)}(1+w_\text{m})\right]}-2 \right] u_{\phi}\, \nonumber \\
    &-\left[\frac{2(1+w_m)}{\left[1-\frac{2\gamma(d-1)}{(d-2)}(1+w_m)\right]}x_{\phi,c} - 2\lambda_V \sqrt{\frac{d-2}{d-1}} \, \right]x_{V,c}\, u_V \, ,
\end{align}
\begin{align}
    \frac{2}{(d-1)N} u_{V}^{'} = & \Bigg[\frac{(1 + w_\text{m}) \left(1-\frac{2\gamma d}{(d-2)} - 3 x_{V,c}^2 \right) + (1-w_\text{m}) x_{V,c}^2}{\left[1-\frac{2\gamma(d-1)}{(d-2)}(1+w_\text{m})\right]} - \lambda_V \sqrt{\frac{d-2}{d-1}}  x_{\phi,c}\, \Bigg]u_{V}\, \nonumber \\
    &+ \left[\frac{2(1 - w_\text{m})}{\left[1-\frac{2\gamma(d-1)}{(d-2)}(1+w_\text{m})\right]}x_{\phi,c} - \lambda_V \sqrt{\frac{d-2}{d-1}}\, \right]x_{V,c}\, u_{\phi}\, .
\end{align}
Therefore, we can express the equations mentioned above in a matrix form
\begin{gather}
\begin{pmatrix}
    u_{\phi}^{'}\\
    u_{V}^{'}\\
\end{pmatrix}
=\textbf{\textit{J}}
\begin{pmatrix}
    u_{\phi}\\
    u_{V}\\ 
\end{pmatrix} \, , 
\end{gather}
 where $\textit{\textbf{J}}$ is Jacobian matrix. Its eigenvalues, $\mu_1$ and $\mu_2$, will be identified and recorded for each critical point. These values will be used to evaluate the stability characteristics of the critical points. The Jacobian matrix along with its eigenvalues for each critical point of the autonomous system in this case can thus be written as follow
 \begin{itemize}
    \item Jacobian matrix for CP1 $(0,0)$:
    \begin{gather}
    \textbf{\textit{J}}=
    \begin{pmatrix}
        \frac{\left[1-\frac{2\gamma d}{(d-2)}\right](1+w_\text{m})}{\left[1-\frac{2\gamma(d-1)}{(d-2)}(1+w_\text{m})\right]}-2& 0 \\ \\
        0 & \frac{\left[1-\frac{2\gamma d}{(d-2)}\right](1+w_\text{m})}{\left[1-\frac{2\gamma(d-1)}{(d-2)}(1+w_\text{m})\right]}
    \end{pmatrix} \, ,
    \end{gather}
with eigenvalues 
 \begin{equation}
        \mu_1=\frac{\left[1-\frac{2\gamma d}{(d-2)}\right](1+w_\text{m})}{\left[1-\frac{2\gamma(d-1)}{(d-2)}(1+w_\text{m})\right]}-2\, , \quad    \mu_2 = \frac{\left[1-\frac{2\gamma d}{(d-2)}\right](1+w_\text{m})}{\left[1-\frac{2\gamma(d-1)}{(d-2)}(1+w_\text{m})\right]}\, .
    \end{equation}
     \item Jacobian matrix for CP2 $\Bigg(\sqrt{1-\frac{2\gamma(1+w_\text{m})}{(1 - w_\text{m})}} , 0\Bigg)$ :
    \begin{gather}
    \textbf{\textit{J}}=
    \begin{pmatrix}
        \frac{2[(1-w_\text{m})-2\gamma(1+w_\text{m})]}{\left[1-\frac{2\gamma(d-1)}{(d-2)}(1+w_\text{m})\right]}& 0 \\\\
        0 & 2-\lambda_V\sqrt{\frac{d-2}{d-1}}\sqrt{1-\frac{2\gamma(1+w_\text{m})}{(1-w_\text{m})}} \\ \\
    \end{pmatrix}\, ,
    \end{gather}
with eigenvalues   
\begin{equation}
       \mu_1=\frac{2[(1-w_\text{m})-2\gamma(1 + w_\text{m})]}{\left[1-\frac{2\gamma(d-1)}{(d-2)}(1+w_\text{m})\right]} \, , \quad  \mu_2 = 2 - \lambda_V\sqrt{\frac{d-2}{d-1}}\sqrt{1-\frac{2\gamma(1+w_\text{m})}{(1-w_\text{m})}}\, .
    \end{equation}
         \item Jacobian matrix for CP3 $\Bigg(-\sqrt{1-\frac{2\gamma(1+w_\text{m})}{(1 - w_\text{m})}} , 0\Bigg)$ :
    \begin{gather}
    \textbf{\textit{J}}=
    \begin{pmatrix}
        \frac{2[(1-w_\text{m})-2\gamma(1+w_\text{m})]}{\left[1-\frac{2\gamma(d-1)}{(d-2)}(1+w_m)\right]}& 0 \\\\
        0 & 2 + \lambda_V\sqrt{\frac{d-2}{d-1}}\sqrt{1-\frac{2\gamma(1+ w_\text{m})}{(1- w_\text{m})}} \\ \\
    \end{pmatrix}\, ,
    \end{gather}
with eigenvalues   
\begin{equation}
       \mu_1=\frac{2[(1-w_\text{m})-2\gamma(1 + w_\text{m})]}{\left[1-\frac{2\gamma(d-1)}{(d-2)}(1+w_\text{m})\right]} \, , \quad  \mu_2 = 2 + \lambda_V\sqrt{\frac{d-2}{d-1}}\sqrt{1-\frac{2\gamma(1 + w_\text{m})}{(1 - w_\text{m})}}\, .
    \end{equation}
     \item Jacobian matrix for CP4 $\Bigg(\frac{1}{\lambda_V} \sqrt{\frac{d-1}{d-2}} \mathcal{A}_{\pm} , \frac{1}{\lambda_V} \sqrt{\frac{(d-1)(2 - \mathcal{A}_{\pm}) 
     \mathcal{A}_{\pm}}{(d-2)}}\Bigg)$:
\begin{gather}
	\hspace{-0.5cm}\textbf{\textit{J}}=
	\begin{pmatrix}
	\mathcal{A_{\pm}}+\frac{2(d-1)(1-w_\text{m})A_{\pm}^2}{\lambda_V^2(d-2)\left[1-\frac{2\gamma(d-1)}{(d-2)}(1+w_\text{m})\right]}-2 \quad & f(\mathcal{A_{\pm}})\left[2-\frac{2(d-1)(1+w_\text{m})\mathcal{A}_{\pm}}{\lambda_V^2(d-2)\left[1-\frac{2\gamma(d-1)}{(d-2)}(1+w_\text{m})\right]}\right]\\ \\
	f(\mathcal{A_{\pm}})\left[-1+\frac{2(d-1)(1-w_\text{m})\mathcal{A}_{\pm}}{\lambda_V^2(d-2)\left[1-\frac{2\gamma(d-1)}{(d-2)}(1+w_\text{m})\right]}\right] \quad & -\frac{2(d-1)(1+w_\text{m})(f(\mathcal{A_{\pm}}))^2}{\lambda_V^2(d-2)\left[1-\frac{2\gamma(d-1)}{(d-2)}(1+w_\text{m})\right]}
	\end{pmatrix}\, ,
	\end{gather}
    where $f(\mathcal{A}_{\pm}) = \sqrt{(2-\mathcal{A}_{\pm})\mathcal{A}_{\pm}}$.
    \item Jacobian matrix for CP5 $\Bigg(0,\sqrt{1-\frac{2\gamma d}{(d-2)}}\Bigg)$: 
        \begin{gather}
    \textbf{\textit{J}}=
    \begin{pmatrix}
        -2 \quad &0 \\ \\
        0 & -\frac{2(1+w_\text{m})}{\left[1-\frac{2\gamma(d-1)}{(d-2)}(1+w_\text{m})\right]}\left[1-\frac{2\gamma d}{(d-2)}\right]&
    \end{pmatrix}\, ,
    \end{gather}
    with eigenvalues
     \begin{equation}
        \mu_1=-2 \, ,  \quad \mu_2=-\frac{2(1+w_\text{m})\left[1-\frac{2\gamma d}{(d-2)}\right]}{\left[1-\frac{2\gamma(d-1)}{(d-2)}(1+w_\text{m})\right]}\, .
    \end{equation}
  \end{itemize}  
  \section{Appendix B: Linear Stability for $\Lambda \neq 0$ Case}
\label{sec:appB}
 In this Appendix, we analyze the linear perturbation of the equations \eqref{eq:xphia}, \eqref{eq:xva} and \eqref{eq:xLambdaa} around the critical points $(x_{\phi,c}, x_{V,c},x_{\Lambda,c\,(\pm)})$ by expanding the autonomous variables as follow
 \begin{eqnarray}
     x_{\phi}=x_{\phi,c} + u_{\phi}\, ,
 \end{eqnarray}
 \begin{eqnarray}
     x_{V}=x_{V,c} + u_{V}\, ,
 \end{eqnarray}
 \begin{eqnarray}
     x_{\Lambda\, (\pm)} = x_{\Lambda,c\,(\pm)} + u_{\Lambda\, (\pm)}\, .
 \end{eqnarray}
The form of the equation of motions for the first order is as follows
\begin{align}
    \frac{2}{(d-1)N} u_{\phi}^{'} = &\left[\frac{(1 + w_\text{m})\left(1-\frac{2\gamma d}{(d-2)} \mp x_{V\Lambda,c\, (\pm)}^2 \right) + 3(1-w_\text{m}) x_{\phi,c}^2}{\left[1-\frac{2\gamma(d-1)}{(d-2)}(1+w_\text{m})\right]}-2 \right] u_{\phi}\, \nonumber \\
    &-\left[\frac{2(1+w_m)}{\left[1-\frac{2\gamma(d-1)}{d-2}(1+w_m)\right]}x_{\phi,c} - 2 \lambda_V \sqrt{\frac{d-2}{d-1}} \, \right]x_{V,c}\, u_V \, \nonumber \\
    &-\frac{2(1+w_\text{m})}{\left[1-\frac{2\gamma(d-1)}{(d-2)}(1+w_m)\right]}x_{V,c}x_{\Lambda,c\, (\pm)}u_{\Lambda\,(\pm)}\, , \\ \nonumber \\
    \frac{2}{(d-1)N} u_{V}^{'} = & \Bigg[\frac{(1 + w_\text{m}) \left(1-\frac{2\gamma d}{d-2} - 3 x_{V,c}^2 \mp x_{\Lambda,c\, (\pm)}^2 \right) + (1-w_\text{m}) x_{\phi,c}^2}{\left[1-\frac{2\gamma(d-1)}{(d-2)}(1+w_m)\right]} - \lambda_V \sqrt{\frac{d-2}{d-1}} x_{\phi,c}\, \Bigg]u_{V}\, \nonumber \\
    &+ \left[\frac{2(1 - w_\text{m})}{\left[1-\frac{2\gamma(d-1)}{(d-2)}(1+w_\text{m})\right]}x_{\phi,c} - \lambda_V \sqrt{\frac{d-2}{d-1}}\, \right]x_{V,c}\, u_{\phi}\, \nonumber \\
    &-\frac{2(1 + w_\text{m})}{\left[1-\frac{2\gamma(d-1)}{(d-2)}(1+w_\text{m})\right]}x_{V,c}\, x_{\Lambda,c\, (\pm)}\, u_{\Lambda\, (\pm)}\, , \\ \nonumber \\ 
    \frac{2}{(d-1)N} u_{\Lambda\, (\pm)}^{'} = &\frac{1}{\left[1-\frac{2\gamma(d-1)}{(d-2)}(1 + w_\text{m})\right]}\Bigg[2 \Big[(1 - w_\text{m})x_{\phi,c}\, \, u_{\phi} - 2(1+w_\text{m})\, x_{V,c}\, u_{V} \Big] x_{\Lambda,c \, (\pm)}\, \nonumber  \\
    &+\Bigg[(1 + w_\text{m}) \left(1-\frac{2\gamma d}{(d-2)} -x_{V,c}^2\right)+(1-w_\text{m})x_{\phi,c}^2\, \nonumber \\ 
&\mp 3 (1 + w_\text{m}) \, x_{\Lambda,c \, (\pm)}^2\Bigg]u_{\Lambda}\Bigg]\, .
\end{align}
Therefore, we can express the equations mentioned above in a matrix form
\begin{gather}
\begin{pmatrix}
    u_{\phi}^{'}\\
    u_{V}^{'}\\
    u_{\Lambda\, (\pm)}^{'}
\end{pmatrix}
=\textbf{\textit{J}}
\begin{pmatrix}
    u_{\phi}\\
    u_{V}\\ 
    u_{\Lambda\, (\pm)}
\end{pmatrix} \, , 
\end{gather}
 where $\textit{\textbf{J}}$ is the Jacobian matrix. Its eigenvalues, $\mu_1$, $\mu_2$, and $\mu_3$, will be identified and recorded for each critical point. These values will be used to evaluate the stability characteristics of the critical points. The Jacobian matrix along with its eigenvalues for each critical point of the autonomous system in this case can thus be written as follow
 \begin{itemize}
    \item Jacobian matrix for CP1 $(0,0,0)$:
       \begin{gather}
    \textbf{\textit{J}}=
    \begin{pmatrix}
        \frac{\left[1-\frac{2\gamma d}{(d-2)}\right](1 + w_\text{m})}{\left[1-\frac{2\gamma(d-1)}{(d-2)}(1 + w_\text{m})\right]}-2& 0 &0\\ \\
        0 & \frac{\left[1-\frac{2\gamma d}{(d-2)}\right](1 + w_\text{m})}{\left[1-\frac{2\gamma(d-1)}{(d-2)}(1 + w_\text{m})\right]}&0 \\ \\
        0 & 0 & \frac{\left[1-\frac{2\gamma d}{(d-2)}\right](1 + w_\text{m})}{\left[1-\frac{2\gamma(d-1)}{(d-2)}(1 + w_\text{m})\right]}
    \end{pmatrix}\, ,
    \end{gather}
with eigenvalues 
 \begin{equation}
        \mu_1=\frac{\left[1-\frac{2\gamma d}{(d-2)}\right](1+w_\text{m})}{\left[1-\frac{2\gamma(d-1)}{(d-2)}(1+w_\text{m})\right]}-2\, , \quad    \mu_2 = \mu_3=\frac{\left[1-\frac{2\gamma d}{(d-2)}\right](1+w_\text{m})}{\left[1-\frac{2\gamma(d-1)}{(d-2)}(1+w_\text{m})\right]}\, .
    \end{equation}
     \item Jacobian matrix for CP2 $\Bigg(\sqrt{1-\frac{2\gamma(1+w_\text{m})}{(1 - w_\text{m})}} , 0, 0\Bigg)$ :
   \begin{gather}
    \textbf{\textit{J}}=
    \begin{pmatrix}
        \frac{2[(1-w_m)-2\gamma(1+w_m)]}{\left[1-\frac{2\gamma(d-1)}{(d-2)}(1+w_m)\right]}& 0 &0\\\\
        0 & 2-\lambda_V\sqrt{\frac{d-2}{d-1}}\sqrt{1-\frac{2\gamma(1+w_m)}{(1-w_m)}} & 0 \\ \\
        0 &0 &2
    \end{pmatrix}\, ,
    \end{gather}
with eigenvalues   
\begin{align}
       \mu_1 = & \, \frac{2[(1-w_\text{m})-2\gamma(1 + w_\text{m})]}{\left[1-\frac{2\gamma(d-1)}{(d-2)}(1+w_\text{m})\right]} \, , \nonumber \\ 
       \mu_2 = & \, 2 - \lambda_V\sqrt{\frac{d-2}{d-1}}\sqrt{1-\frac{2\gamma(1+w_\text{m})}{(1-w_\text{m})}}\, , \\
       \mu_3 =  & \, 2\, . \nonumber
\end{align}
         \item Jacobian matrix for CP3 $\Bigg(-\sqrt{1-\frac{2\gamma(1+w_\text{m})}{(1 - w_\text{m})}} , 0, 0\Bigg)$ :
\begin{gather}
    \textbf{\textit{J}} = 
    \begin{pmatrix}
        \frac{2[(1-w_m)-2\gamma(1+w_m)]}{\left[1-\frac{2\gamma(d-1)}{(d-2)}(1+w_m)\right]}& 0 &0 \\ \\
        0 & 2 + \lambda_V \sqrt{\frac{d-2}{d-1}} \sqrt{1-\frac{2\gamma(1+w_m)}{(1-w_m)}} & 0\\ \\
        0 &0 &2
    \end{pmatrix}\, ,
\end{gather}
with eigenvalues   
\begin{align}
       \mu_1 = & \, \frac{2[(1-w_\text{m})-2\gamma(1 + w_\text{m})]}{\left[1-\frac{2\gamma(d-1)}{(d-2)}(1+w_\text{m})\right]} \, , \nonumber \\ 
       \mu_2 = & \, 2 + \lambda_V\sqrt{\frac{d-2}{d-1}}\sqrt{1-\frac{2\gamma(1+w_\text{m})}{(1-w_\text{m})}}\, , \\
       \mu_3 =  & \, 2\, . \nonumber
\end{align}
     \item Jacobian matrix for CP4 $\Bigg(\frac{1}{\lambda_V} \sqrt{\frac{d-1}{d-2}} \mathcal{A}_{\pm} , \frac{1}{\lambda_V} \sqrt{\frac{(d-1)(2 - \mathcal{A}_{\pm}) 
     \mathcal{A}_{\pm}}{(d-2)}}, 0\Bigg)$:
\begin{gather}
	\hspace{-0.4cm}\textbf{\textit{J}}=
	\begin{pmatrix}
	\mathcal{A_{\pm}}+\frac{2(d-1)(1-w_\text{m})A_{\pm}^2}{\lambda_V^2(d-2)\left[1-\frac{2\gamma(d-1)}{(d-2)}(1+w_\text{m})\right]}-2 \quad & f(\mathcal{A}_{\pm})\left[2-\frac{2(d-1)(1+w_\text{m})\mathcal{A}_{\pm}}{\lambda_V^2(d-2)\left[1-\frac{2\gamma(d-1)}{(d-2)}(1+w_\text{m})\right]}\right] \quad & 0\\ \\
	f(\mathcal{A}_{\pm})\left[-1+\frac{2(d-1)(1-w_\text{m})\mathcal{A}_{\pm}}{\lambda_V^2(d-2)\left[1-\frac{2\gamma(d-1)}{(d-2)}(1+w_\text{m})\right]}\right] \quad & -\frac{2(d-1)(1+w_\text{m})(2-\mathcal{A}_{\pm})\mathcal{A}_{\pm}}{\lambda_V^2(d-2)\left[1-\frac{2\gamma(d-1)}{(d-2)}(1+w_\text{m})\right]} \quad & 0 \\ \\
	0 \quad & 0 \quad & \mathcal{A}_{\pm}
	\end{pmatrix}\, ,
\end{gather}
where $f(\mathcal{A}_{\pm}) = \sqrt{(2-\mathcal{A}_{\pm})\mathcal{A}_{\pm}}$.
    \item Jacobian matrix for CP5 $\left(0,\sqrt{1-\frac{2\gamma d}{(d-2)}\mp x_{\Lambda,c(\pm)}^2},x_{\Lambda,c \, (\pm)}\right)$: 
   \begin{gather}
    \textbf{\textit{J}}=
    \begin{pmatrix}
        -2 \hspace{0.8cm} &0 & 0\\\\
        0 & -g\left[1-\frac{2\gamma d}{d-2} \mp x_{\Lambda,c\, (\pm)}^2\right]& -g\sqrt{1-\frac{2\gamma d}{d-2} \mp x_{\Lambda,c\, (\pm)}^2}\, x_{\Lambda,c\, (\pm)}\\\\
        0 & -g \sqrt{1-\frac{2\gamma d}{d-2}-x_{\Lambda,c\, (\pm)}^2}\, x_{\Lambda,c\, (\pm)} & \mp g\, x_{\Lambda,c\, (\pm)}^2
    \end{pmatrix}
    \end{gather}
    with eigenvalues
    \begin{equation}
        \mu_1=-2 \, ,  \quad \mu_2=-g\, {\left[1-\frac{2\gamma d}{(d-2)}\right]}\, , \quad \mu_3 =0\, ,
    \end{equation}
    where $g = \frac{2(1+w_m)}{\left[1-\frac{2\gamma(d-1)}{(d-2)}(1+w_m)\right]}$.
  \end{itemize}

\end{document}